\newcommand{\defi}{\overset{\Delta}{=}}
\newcommand{\E}{\mathbb{E}}
\newcommand{\N}{\mathbb{N}}
\newcommand{\R}{\mathbb{R}}
\newcommand{\C}{\mathbb{C}}
\newcommand{\1}{\mathbf{1}}
\newcommand{\esp}[1]{\E\left[#1\right]}
\newcommand{\espt}[1]{\E_\theta\left[#1\right]}
\newcommand{\indii}[1]{\mathbbm{1}_{#1}}
\def\rmd{\mathrm{d}} 
\def\rme{\mathrm{e}} 
\def\rmi{\mathrm{i}} 
\newtheorem{prop}{Proposition}[section]
\newtheorem{theorem}{Theorem}[section]
\newtheorem{coro}{Corollary}[section]
\newtheorem{lemma}{Lemma}[section]
\newtheorem{definition}{Definition}[section]
\theoremstyle{definition}
\newtheorem{remark}{Remark}[section]
\newenvironment{hyp}[1]{
\begin{enumerate}[label=(\textbf{\sf #1}-\arabic*),resume=hyp#1]\begin{sf}}
{\end{sf}\end{enumerate}}
\begin{document}

% "Title of the Paper"
\title{Nonparametric estimation of mark's distribution of an exponential Shot-noise process}

\author[$\dagger$,$\ddagger$]{Paul Ilhe \thanks{This research was partially supported by Labex DigiCosme (project ANR-11-LABEX-0045-DIGICOSME) operated by ANR as part of the program ``Investissement d'Avenir'' Idex Paris-Saclay (ANR-11-IDEX-0003-02).}}
\author[**]{\'Eric Moulines }
\author[$\dagger$]{Fran\c cois Roueff }
\author[$\ddagger$]{Antoine Souloumiac }
\affil[$\dagger$]{Telecom ParisTech, 46 rue Barrault, 75634 Paris Cedex 13, France}
\affil[$\ddagger$]{CEA, LIST, 91191 Gif-sur-Yvette Cedex, France}
\affil[**]{Ecole Polytechnique, Centre de Math\'ematiques Appliqu\'ees- UMR 7641, Route de Saclay 91128 Palaiseau}
\maketitle

\begin{abstract}
  In this paper, we consider a nonlinear inverse problem occuring in nuclear
  science. Gamma rays randomly hit a semiconductor detector which produces an
  impulse response of electric current. Because the sampling period of the
  measured current is larger than the mean interarrival time of photons, the
  impulse responses associated to different gamma rays can overlap: this
  phenomenon is known as \textit{pileup}. In this work, it is assumed that the
  impulse response is an exponentially decaying function. We propose a novel
  method to infer the distribution of gamma photon energies from the indirect
  measurements obtained from the detector. This technique is based on a formula
  linking the characteristic function of the photon density to a function
  involving the characteristic function and its derivative of the
  observations. We establish that our estimator converges to the mark density
  in uniform norm at a polynomial rate.  A limited Monte-Carlo experiment is
  provided to support our findings.
\end{abstract}

% history:
% \received{\smonth{1} \syear{0000}}

%\tableofcontents

\begin{sloppypar}

\section{Introduction}

In this paper, we consider a nonlinear inverse problem arising in nuclear
science: neutron transport or gamma spectroscopy. For the latter, a radioactive
source, for instance an excited nucleus, randomly emits gamma photons according
to a homogeneous Poisson point process. These high frequency radiations can be
associated to high energy photons which interact with matter via three
phenomena : the photoelectric absorption, the Compton scattering and the pair
production (further details can be found in \cite{knoll1989radiation}). When
photons interact with the semiconductor detector (usually High-Purity Germanium
(HPGe) detectors) arranged between two electrodes, a number of electron-holes
pairs proportional to the photon transferred energy is created. Accordingly,
the electrodes generate an electric current called impulse response whenever
the detector is hit by a particle, with an amplitude corresponding to the
transferred energy. In this context, a feature of interest is the distribution
of this energy. Indeed, it can be compared to known spectra in order to
identify the composition of the nuclear source. In practice, the electric
current is not continuously observed but the sampling rate is typically smaller
than the mean inter-arrival time of two photons. Therefore, there is a high
probability that several photons are emitted between two measurements so that
the energy deposited is superimposed in the detector, a phenomenon called
pile-up. Because of the pile-up, it is impossible to establish a one-to-one
correspondence between a gamma ray and the associated deposited energy.

This inverse problem can be modeled as follows. The electric current generated in the detector
is given by a stationary shot-noise process \(\textbf{X}= \left(X_t \right)_{t  \in\R} \) defined by:
\begin{equation}
  \label{eq:def-Xt}
   X_t = \sum_{k : T_k \leq t} Y_k \, h(t-T_k) \;,
\end{equation}
where $h$ is the (causal) impulse response of the detector and
\begin{hyp}{SN}
\item\label{ass:poisson} $\sum_{k} \delta_{T_k, Y_k}$ is a Poisson
  point process with times $T_k \in \R$ arriving homogeneously with intensity $\lambda>0$ and independent
  i.i.d. marks $Y_k \in\R$ having a probability density function (p.d.f.)
  $\theta$ and cumulative distribution function (c.d.f) $F$.
\end{hyp}
We wish to estimate the density $\theta$ from a regular observation sample
$X_1,\dots,X_n$ of the shot noise \eqref{eq:def-Xt}. Note that the sampling
rate is set to 1 without meaningful loss of generality. If a different sampling
rate is used, e.g. we observe $X_\delta,\dots, X_{n\delta}$ for some $\delta
\neq 1$, it amounts to change $\lambda$ and to scale $h$ accordingly.

The process~(\ref{eq:def-Xt}) is well defined whenever the following condition
holds on the impulse response $h$ and the density $\theta$
\begin{equation}
  \label{eq:cns-shot}
  \int \min(1, |y\,h(s)|)\theta(y)\rmd y\rmd s < \infty\;.
\end{equation}

As shown in \cite{iksanov2003shot}, this condition is also necessary. Moreover,
the marginal distribution of $\bf{X}$ belongs to the class of infinitely
divisible (ID) distributions and has L\'{e}vy measure $\nu$ satisfying, for all
Borel sets $B$ in
$\R\setminus\{0\}$, % defined by%-Khintchine triplet $\left( a,0,\nu\right)$:
\begin{align}
% \nonumber
% &a=\int_0^\infty \espt{\frac{Y_1h(s)}{1+Y_1^2h^2(s)}} \rmd s \quad , \\
\label{eq:levytripletSN}
&\nu (B ) \defi \lambda \int_0^\infty \mathbb{P}\left(h(s)Y_1 \in B\right)\;
\rmd s \;. %  \quad\text{with%\quad  B \in \mathcal{B}(\R^*) \quad .
\end{align}

The ID property of the marginal distribution shows that
this estimation problem is closely related to the estimation of the L{\'e}vy
measure $\nu$. This property strongly suggests to
use estimators of the L\'{e}vy triplet, see for instance
\cite{neumann2009nonparametric} and
\cite{gugushvili2009nonparametric}. However, up to our best knowledge, these
estimators use the increments of the corresponding L\'evy process which are
i.i.d. and they assume a finite L\'evy Khintchine
measure. In contrast, the observations are not independent and the L\'{e}vy
measure of the process is infinite since from \eqref{eq:levytripletSN}, we
have that
\begin{equation} \nu(\R) = \lambda\int_0^\infty \mathbb{P}\left(h(s)Y_1 \in
    \R\right)\rmd s =\infty. \end{equation} In order to tackle this estimation
problem, we then propose to bypass the estimation of $\nu$ and directly
retrieve the density $\theta$ of the marks distribution $F$ from the empirical
characteristic function of the measurements. Coarsely speaking,
using~(\ref{eq:levytripletSN}), the L\'{e}vy-Khintchine representation provides
an expression of the characteristic function $\varphi_X$ of the marginal
distribution as a functional of $\theta$. The estimator is built upon replacing
$\varphi_X$ by its empirical version and inverting the mapping
$\theta\mapsto\varphi_X$. A more standard marginal-based approach would be to
rely  on the p.d.f. of $\mathbf{X}$. However,  the density of $X_0$ is intractable, which precludes the use of a
likelihood inference method. Consequently, although shot-noise models are
widespread in applications (for example, such models were used to model internet
traffic \cite{barakat2003modeling}, river streamflows
\cite{claps2005advances}, spikes in neuroscience
(\cite{holden1976models},\cite{hohn2001shot}) and in signal processing
(\cite{sequeira1995intensity} , \cite{sequeira1997blind})), theoretical results
on the statistical inference of shot-noise appear to be limited. Recently, Xiao
and al. (\cite{xiao2006inference}) provide consistent and asymptotically normal
estimators for parametric shot-noise processes with specific impulse responses.

In this contribution, we consider the particular case given by the following assumption. 
\begin{hyp}{SN} 
\item \label{cond:IRexpo}The impulse response $h$ is an exponential function with
  decreasing rate $\alpha>0$~:
\begin{equation*}
h(t) \defi \rme^{-\alpha t}\;\mathbbm{1}_{\R_+}(t) \;.
\end{equation*}
\end{hyp}
Under~\ref{cond:IRexpo}, the process $(X_t)_{t\in\R}$ is usually called an \emph{exponential
  shot-noise}. In this case, Condition~(\ref{eq:cns-shot}) becomes
\begin{equation}
\label{eq:cond-expo-shot-noise}
\esp{\log_+(\left|Y_1\right|)} <\infty\;. 
\end{equation}
Under~\ref{cond:IRexpo}, the process $(X_t)_{t\geq0}$ can alternatively be
introduced by considering the following stochastic differential equation
(SDE)~:
\begin{equation}
\label{eq:L{\'e}vyOU}
\rmd X_t = -\alpha X_t \rmd t + \rmd L_t \; , \quad X_0=x \in \R
\end{equation}
where $\textbf{L}=(L_t)_{t\geq 0}$ is a L\'evy process defined as the compound
Poisson process  
\begin{equation}\label{eq:CPP}
L_t \defi
  \sum_{k=1}^{N_\lambda(t)}Y_k\quad\text{with}\quad N_\lambda(t) \defi \sum_k\indii{T_k\leq t}\;,
\end{equation} 
where $(T_k,Y_k)_{k\geq 0}$ satisfies \ref{ass:poisson}. The solution to the
equation \eqref{eq:L{\'e}vyOU} is called a Ornstein-Uhlenbeck(O-U) process
(\cite[Chapter~17]{Sato-1999levy}) driven by \textbf{L} with initial condition
$X_0=x$ and rate $\alpha$.  Note that $\mathbf{L}$ defined by~(\ref{eq:CPP})
has L{\'e}vy measure $\lambda F$. Thus, by \cite[Theorem 17.5]{Sato-1999levy},
this Markov process admits a unique stationary version
if~(\ref{eq:cond-expo-shot-noise}) holds, and this stationary solution
corresponds to the shot-noise process \eqref{eq:def-Xt}.
\newline

In recent works, \cite[Brockwell, Schlemm]{brockwell2013parametric} exploit the integrated version of  \eqref{eq:L{\'e}vyOU} to recover the L\'evy process \textbf{L} and show that the increments of \textbf{L} can be represented as: \begin{equation*}
L_{nh}-L_{(n-1)h} = X_{nh}-X_{(n-1)h} +\alpha\int_{(n-1)h}^{nh} X_s \rmd s.
\end{equation*}
These quantities are only well estimated for high frequency observations so
that we cannot rely on this method in our regular sampling scheme.

To the best of our knowledge, the paper that best fits our setting is \cite{jongbloed2005nonparametric}. The authors propose a nonparametric estimation procedure from a low frequency sample of a stationary O-U process  which exploits the self decomposability property of the marginal distribution. The authors construct an estimator of the so called canonical function $k$ defined by:\begin{equation*}
\nu(\rmd x) =\frac{k(x)}{x}\rmd x .
\end{equation*} The two main additional assumptions are that $k$ is decreasing on $(0,\infty)$ and $\nu$ satisfies the integrability condition $\int_0^\infty (1\wedge x)\nu(\rmd x) <\infty$. In our setting (i.e. when specifying the L\'evy process to be the compound Poisson process defined in \eqref{eq:CPP}), it is easily shown that these conditions hold and the canonical function and the cumulative distribution of the marks are related by the equation:\begin{equation*}
k(x)=\lambda \mathbb{P}\left(Y_0 > x\right)=\lambda \left(1-F(x)\right).
\end{equation*}

In this article, we introduce an estimator of $\theta$ based on the empirical
characteristic function and a Fourier inversion formula. This algorithm is
numerically efficient, being able to handle large datasets typically used in
high-energy physics. Secondly, we establish an upper bound of the rate of
convergence of our estimator which is uniform over a smoothness class of
functions for the density $\theta$.

The paper is organized as follows. In Section~\ref{sec:main-result}, we introduce some
preliminaries on the characteristic function of an exponential shot-noise
process and provide both the inversion formula and the estimator of the density $\theta$. In particular, we derive an upper bound of convergence
for our estimator over a broad class of densities under the assumption that
\(\lambda/\alpha \) is known. In Section \ref{section:num_res}, we present in details the algorithm
used to perform the density estimation and illustrate our findings with a limited
Monte-Carlo experiment. Section \ref{section:err_bounds} provides error bounds for the empirical
characteristic function based on discrete-time observations and exploit the $\beta$-mixing structure of the process. Finally, Section \ref{Proofs} is devoted to the proofs of the various theorems.

\section{Main result}
\label{sec:main-result}
\subsection{Inversion formula}

As mentioned in the introduction, it is difficult to derive the probability
density function of the stationary shot-noise unless the marks are distributed
according to an exponential random variable and the impulse response is an
exponential function. In this case, it turns out that the marginal distribution
of the shot-noise is Gamma-distributed (the reader can refer to
\cite{bondesson1992shot} for details). In all other cases, we can only compute
the characteristic function of the marginal distribution of the stationary
version of the shot-noise when treating it as a filtered point process (see for
example \cite{resnick2007extreme} for details). We have for every real
$u$: \begin{equation}\label{SNcf} \varphi(u) \defi \E[\rme^{\rmi uX_0}]
  = \exp\left(\lambda  \int_{\R} \int_{0}^\infty(\rme^{\rmi uyh(v)}-1) \rmd
    v\,F(\rmd y)\right) .
\end{equation}

From \eqref{SNcf}, the characteristic function of $X_0$ can be expressed as follows:\begin{equation}\label{eq:functionalMellin}
\varphi_{X_0}(u) = \exp\left( \int_\R \lambda K_h(uy)F(\rmd y) \right) .
\end{equation}
where $K_h$, the kernel associated to $h$ is given by:\begin{equation*}
K_h(x) \defi \int_0^{+\infty} (\rme^{\rmi xh(v)}-1)\rmd v .
\end{equation*}
Note that if $h$ is integrable, then $K_h$ is well defined since, for any real $x$, $\int_0^\infty \vert \rme^{\rmi  xh(s)} -1 \vert \rmd s \leq |x|\int_0^\infty |h(s)|\rmd s $.
Moreover, if $h$ is integrable, then $K_h$ is a $\mathcal{C}^1(\R , \C)$ function whose derivative is bounded and equal to: \begin{equation*}
 K_h'(x) =\int_0^{+\infty} \rmi h(s)\rme^{\rmi xh(v)} \rmd v .
\end{equation*}
Furthermore, if $\esp{|Y_0|} < \infty$, then the characteristic function of $X_0$ is differentiable and we have: \begin{equation}
\label{SNcfderivative}
\varphi_{X_0}^\prime(u) =\lambda \; \varphi_{X_0}(u)  \, \int_\R yK_h'(uy)F(\rmd y) .
\end{equation}
%Moreover, it is easily seen that the marginal distribution of the stationary shot-noise \eqref{eq:def-Xt} is infinitely divisible. In addition, it is shown %in \cite[Lemma 7.5.]{Sato-1999levy} that the characteristic function of an infinitely divisible distribution does not vanish.

Under~\ref{cond:IRexpo}, the kernel $K_h$ takes the form 
\begin{equation}\label{eq:K_alpha}
K_{h}(u)= \int_0^\infty \left(\rme^{\rmi u \rme^{-\alpha v}} -1 \right)\rmd v =\int_0^u \frac{\rme^{\rmi s}-1}{\alpha s} \rmd s  .
\end{equation}
With \eqref{SNcfderivative}, we obtain that
\begin{equation} \label{eq:characYcharacX}
\varphi_{X_0}'(u)=\varphi_{X_0}(u) \frac{\lambda}{\alpha u} \left( \varphi_{Y_0}(u) -1 \right)  .
\end{equation}

Since the marginal distribution of $X$ is infinitely divisible, we have by
\cite[Lemma 7.5.]{Sato-1999levy} that  $\varphi_X(u)$ does not vanish. If in
addition $\varphi_Y$ is integrable, \eqref{eq:characYcharacX} provides a way to
recover $\theta$ knowing $\alpha /\lambda$, namely, for all $x \in \R$,
 \begin{equation}
 \label{eq:inversion_densite}
 \theta(x) =\frac{1}{2\pi}\int_\R \rme^{-\rmi xu} \varphi_{Y_0}(u) du=\frac{1}{2\pi}\int_\R \rme^{-\rmi xu} \left( 1 +\frac{\alpha u}{\lambda} \frac{\varphi_{X_0}'(u)}{\varphi_{X_0}(u)} \right) \rmd u \;.
 \end{equation}
This relation shows that the estimation problem of the p.d.f. $\theta$ is directly related to the estimation of the second characteristic function. 
\begin{remark}
  We assume in the following that the ratio $\alpha / \lambda$ appearing in the
  inversion formula \eqref{eq:inversion_densite} is a known constant, as it
  typically depends on the measurement device. Interestingly, however, an
  estimator of this constant can be derived from \cite[Theorem
  1]{iksanov2003shot}, where it is shown that the marginal distribution $G$ of
  the stationary shot-noise is regularly varying at $0$ with index $\lambda /
  \alpha$, i.e. :
\begin{equation*}
G(x) \sim x^{\lambda / \alpha}L(x) \quad \text{,}\quad x \to 0 \;
\end{equation*}
with $L$ being slowly varying at $0$. Hence it is possible to estimate
$\alpha/\lambda$ by applying Hill's estimator \cite{hill1975simple} to
the sample $X^{-1}_1,\cdots,X^{-1}_n$. 
\end{remark}

\subsection{Nonparametric estimation}

Let $\hat{\varphi}_n(u) \defi n^{-1}\sum_{j=1}^n \rme^{\rmi uX_j}$ denotes the
empirical characteristic function (e.c.f.) obtained from the observations and
$\hat{\varphi}_n'$ its derivative. From
\eqref{eq:inversion_densite}, we are tempted to plug the e.c.f. of the
observations to estimate the p.d.f. $\theta$. Let $(h_n)_{n \geq 0}$
and$(\kappa_n)_{n \geq 0}$ be two sequences of positive numbers such that
$$
\lim_{n \to \infty} h_n =\lim_{n \to \infty} \kappa_n  = 0\;,
$$ and consider the
following sequence of estimators:
\begin{equation}
 \label{ESTIM}
 \hat{\theta}_n(x) \defi \max\left(\frac{1}{2\pi}\int_{-\frac{1}{h_n}}^{\frac{1}{h_n}} \rme^{-\rmi xu} \left(1 + \frac{\alpha u}{\lambda}\frac{ \hat {\varphi}'_n(u)}{\hat {\varphi}_n(u)}\mathbbm{1}_{|\hat {\varphi}_n(u)| > \kappa_n} \right)\rmd u  , 0 \right)\;.
 \end{equation}
 \begin{remark}\label{remark:MainTh}
   We estimate $1/\varphi(u)$ by \(
   \mathbbm{1}_{\lbrace\left|\hat{\varphi}_n(u)\right| \geq
     \kappa_n\rbrace}/\hat{\varphi}_n(u)\) with a suitable choice of a sequence
   \((\kappa_n)_{n \geq 1} \) which converges to zero. The constant $\kappa_n$
   is chosen such that $\left| \hat \varphi_n(u) -\varphi(u)\right|$ remains
   smaller than $\left|\hat \varphi_n(u)\right|$ and $\left|\varphi(u)\right|$
   with high probability in order to avoid large errors when inverting $\hat
   \varphi_n(u)$. In \cite{neumann2009nonparametric}, the authors deal with
   the empirical characteristic function of i.i.d. random variables. In this case, the
   deviations of $\sqrt{n}(\hat{\varphi}_n(u)-\varphi(u))$ are bounded in
   probability, hence, they use $1_{\lbrace|\hat{\varphi}_n(u)| \geq \kappa
     n^{-1/2}\rbrace}/\hat{\varphi}_n(u)$ as an estimator of
   $1/\varphi(u)$. Here we truncate the interval of integration \( \R \)
   by \( \left[ -h_n^{-1} , h_n^{-1} \right] \), where $h_n$ is a bandwidth
   parameter. This allows us to bound the estimation error $\hat
   \theta_n-\theta$ in sup norm. The deviation of
   $\sqrt{n}(\hat{\varphi}_n(u)-\varphi(u))$ on $\left[ -h_n^{-1} , h_n^{-1}
   \right]$ depends on $h_n$, see Theorem~\ref{Thm:ECF}. The resulting
   $\kappa_n$ is then taken slightly larger than $n^{-1/2}$.
 \end{remark}

 In order to evaluate the convergence rate of our estimator, we consider
 particular smoothness classes for the density $\theta$. Namely we define, for
 any positive constants $K, L, m$ and $s >1/2$,
\begin{align}
  \label{eq:smoothness-class}
 \Theta(K,L,s,m)=\Big\lbrace \theta \text{ is a density s.t. }&\int
 |y|^{4+m}\theta(y)\,\rmd y\leq K \;,\\
 &\nonumber\int_\R
 (1+|u|^2)^{s}\left|\mathcal{F}\theta(u)\right|^2\,\rmd u\leq L^2\Big\rbrace\;,
\end{align} 
where $\mathcal{F}\theta$ denotes the Fourier transform of $\theta$
$$
\mathcal{F}\theta(u)=\int \theta(y)\,\rme^{-\rmi y \,u}\;\rmd y \quad,\quad u\in\R \;.
$$
Hence $L$ is an upper bound of the Sobolev semi-norm of $\theta$. 
Note also that under \ref{ass:poisson}-\ref{cond:IRexpo}, $\theta$ belongs to 
$\Theta(K,L,s,m)$ is equivalent to assuming that
$$
\esp{\left|Y_0\right|^{4+m}} \leq K\quad\text{and}\quad\int_\R (1+|u|^2)^{s} |\varphi_{Y_0}(u) |^2 du \leq L^2 \;.
$$
In the following, under assumptions \ref{ass:poisson}-\ref{cond:IRexpo}, we use
the notation $\mathbb{P}_\theta$ and $\mathbb{E}_\theta$, where the subscript
$\theta$ added to the expectation and probability symbols indicates explicitly
the dependence on the unknown density $\theta$.  
The following result provides a bound of the risk $\mathbb{P}_\theta(
\|\theta-\hat{\theta}_n\|_\infty > M_n)$ for well chosen sequences
$(h_n)$, $(\kappa_n)$ and $(M_n)$, which is uniform over the densities $\theta\in
\Theta(K,L,s,m)$.

 \begin{theorem}
\label{theorem22}
Assume that the process $\textbf{X}=\left(X_t\right)_{t\geq 0}$ given by
\eqref{eq:def-Xt} satisfies the assumptions
\ref{ass:poisson}-\ref{cond:IRexpo} for some positive
constants $\lambda$ and $\alpha$. Let $K, L, m$ be positive
constants and $s>1/2$. Let $C_{K,L,m,\lambda/\alpha}$ be the constant defined by
$$
C_{K,L,m,\lambda/\alpha}\defi \exp\left(-\frac{\lambda}{\alpha}(K^{1/(4+m)}+L)\right)\;,
$$
and $C$ be a positive constant such that  $
 0<C<C_{K,L,m,\lambda/\alpha}\;.$
Set
\begin{equation*}
h_n= n^{-1/(2s+1+2\lambda/\alpha)} \quad\text{ and } \quad \kappa_n=C\left(1+h_n^{-1}\right)^{-2\lambda/\alpha} \;,
\end{equation*}
and define $\hat\theta_n$ by~(\ref{ESTIM}).
Then, for $n\geq 3$, the density estimator $\hat \theta_n$ satisfies
\begin{equation}\label{eq:main_th}
 \sup_{\theta\in\Theta(K,L,s,m)} \espt{ \left\|\theta-\hat{\theta}_n\right\|_\infty } \leq M n^{-(2s-1)/(4s+2+4\lambda/\alpha)}\log(n)^{1/2}\;,
\end{equation}
where $M>0$ is a constant only depending on $C, K,L,m,s$ and $\lambda/\alpha$.
\end{theorem}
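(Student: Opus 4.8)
The plan is to split the risk into a deterministic (bias) term and a stochastic term and to choose $h_n$ so that the two balance. Since $\theta\geq 0$, the positive-part truncation in \eqref{ESTIM} can only decrease the error, so it suffices to bound $\|\theta-\tilde\theta_n\|_\infty$, where $\tilde\theta_n$ is \eqref{ESTIM} without the outer $\max(\cdot,0)$. Introducing the oracle $\bar\theta_n(x)\defi\frac{1}{2\pi}\int_{-1/h_n}^{1/h_n}\rme^{-\rmi xu}\varphi_{Y_0}(u)\,\rmd u$ and writing $\varphi\defi\varphi_{X_0}$, I would use $\|\theta-\tilde\theta_n\|_\infty\leq\|\theta-\bar\theta_n\|_\infty+\|\bar\theta_n-\tilde\theta_n\|_\infty$ and treat the two pieces separately.

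For the bias, the inversion formula \eqref{eq:inversion_densite} gives $\theta(x)-\bar\theta_n(x)=\frac{1}{2\pi}\int_{|u|>1/h_n}\rme^{-\rmi xu}\varphi_{Y_0}(u)\,\rmd u$, whence $\|\theta-\bar\theta_n\|_\infty\leq\frac{1}{2\pi}\int_{|u|>1/h_n}|\varphi_{Y_0}(u)|\,\rmd u$. A Cauchy--Schwarz step against the Sobolev weight $(1+|u|^2)^s$ and the definition of $\Theta(K,L,s,m)$ in \eqref{eq:smoothness-class} bound this by $c\,L\,\bigl(\int_{|u|>1/h_n}(1+|u|^2)^{-s}\,\rmd u\bigr)^{1/2}\lesssim L\,h_n^{\,s-1/2}$, uniformly over the class. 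With the prescribed $h_n$ this is exactly the announced order $n^{-(2s-1)/(4s+2+4\lambda/\alpha)}$, which is what fixes the trade-off.

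The stochastic term is where the work lies. Writing $\epsilon\defi\hat\varphi_n-\varphi$ and $\epsilon'\defi\hat\varphi_n'-\varphi'$ and using \eqref{eq:characYcharacX} to rewrite $\frac{\alpha u}{\lambda}\frac{\varphi'}{\varphi}=\varphi_{Y_0}-1$, the integrand of $\bar\theta_n-\tilde\theta_n$ rearranges, on the event where the indicator equals one, into $D(u)=(\varphi_{Y_0}(u)-1)\frac{\epsilon(u)}{\hat\varphi_n(u)}-\frac{\alpha u}{\lambda}\frac{\epsilon'(u)}{\hat\varphi_n(u)}$. Two deterministic ingredients control the denominators: first, a uniform lower bound $|\varphi(u)|\geq C_{K,L,m,\lambda/\alpha}(1+|u|)^{-\lambda/\alpha}$ over the class, obtained from \eqref{eq:functionalMellin} by bounding $\Re K_h(u)=\int_0^u\frac{\cos s-1}{\alpha s}\,\rmd s\geq-\frac1\alpha(\mathrm{const}+\log_+|u|)$ and by controlling $\int\log_+|y|\,F(\rmd y)$ through the constraints $\E|Y|^{4+m}\leq K$ and the Sobolev bound $L$; second, the choice of $\kappa_n$, taken strictly below this lower bound on $[-1/h_n,1/h_n]$. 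On the good event $A_n\defi\{\sup_{|u|\leq 1/h_n}|\epsilon(u)|\vee|\epsilon'(u)|\leq\delta_n\}$ supplied by Theorem~\ref{Thm:ECF}, with $\delta_n$ of order $n^{-1/2}\log(n)^{1/2}$, the indicator is $1$ and $|\hat\varphi_n|\geq\frac12|\varphi|$, so both denominators are tamed. The first part of $D$ is then harmless: as $|\varphi_{Y_0}-1|\leq 2$, bounding $\sup_x$ by the $L^1$-norm gives $\int_{-1/h_n}^{1/h_n}\frac{|\epsilon(u)|}{|\varphi(u)|}\,\rmd u\lesssim\delta_n\,h_n^{-(1+\lambda/\alpha)}$, again of the announced order.

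The main obstacle is the derivative part $\sup_x\bigl|\int_{-1/h_n}^{1/h_n}\rme^{-\rmi xu}\frac{\alpha u}{\lambda}\frac{\epsilon'(u)}{\hat\varphi_n(u)}\,\rmd u\bigr|$, where the crude triangle/$L^1$ bound is too lossy: the empirical derivative $\hat\varphi_n'(u)=\frac{\rmi}{n}\sum_j X_j\rme^{\rmi uX_j}$ does not decay in $u$, whereas $1/|\varphi(u)|$ grows like $|u|^{\lambda/\alpha}$, so the naive weight $|u|^{1+\lambda/\alpha}$ integrates to too large a power of $h_n^{-1}$ (and integration by parts only reintroduces an unbounded factor $x$). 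My plan is instead to treat this quantity as a centred empirical process indexed by $x$: replacing $\hat\varphi_n$ by $\varphi$ up to higher-order terms and integrating the average term by term gives $\frac1n\sum_j Z_j(x)$ with $Z_j(x)=X_j\check{W}(x-X_j)-\E[X_0\check{W}(x-X_0)]$, where $\check{W}$ is the inverse Fourier transform of $u\mapsto\frac{\alpha u}{\lambda\varphi(u)}\mathbbm{1}_{[-1/h_n,1/h_n]}$. Since $X_0\check{W}(x-X_0)$ concentrates on the window $|X_0-x|\lesssim h_n$ carrying the mass of the marginal density, the relevant weak variance should be governed by $\|\check{W}\|_2^2$ via Parseval rather than by $\|\check{W}\|_\infty^2$; combined with $\E|Y|^{4+m}\leq K$ and the $\beta$-mixing bounds of Section~\ref{section:err_bounds}, a Bernstein-type inequality then controls $\E\sup_x|\frac1n\sum_j Z_j(x)|$. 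This is the step I expect to be delicate, and the one where I am least certain the analysis does not cost an extra half-power of $h_n^{-1}$; reconciling the non-decay of $\hat\varphi_n'$ with the target order is really the crux. Everything else is routine: on $A_n^c$ one uses the polynomial deterministic bound $\|\tilde\theta_n\|_\infty\lesssim h_n^{-1}\kappa_n^{-1}$ against the super-polynomial smallness of $\mathbb{P}_\theta(A_n^c)$ from Theorem~\ref{Thm:ECF}, and the final optimisation of $h_n$ yields the stated rate.
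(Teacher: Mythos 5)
Your proposal reproduces the paper's proof in every part you actually carry out: the same bias/stochastic split through the truncated oracle, the same Cauchy--Schwarz/Sobolev bound $\|\theta-\bar\theta_n\|_\infty\lesssim L\,h_n^{s-1/2}$, the same algebraic identity $D(u)=(\varphi_{Y_0}(u)-1)\,\epsilon(u)/\hat\varphi_n(u)-\frac{\alpha u}{\lambda}\,\epsilon'(u)/\hat\varphi_n(u)$, the same polynomial lower bound $|\varphi_{X_0}(u)|\geq C_{K,L,m,\lambda/\alpha}(1+|u|)^{-\lambda/\alpha}$ (Lemma~\ref{lemma:LemmaMainResult}) with $\kappa_n$ placed below it, and the same uniform deviation bounds for $\hat\varphi_n$ and $\hat\varphi_n'$ from Theorem~\ref{Thm:ECF}/Corollary~\ref{ECFbounds:Coro}. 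But the proposal has two genuine gaps. The main one you flag yourself: the derivative term is only a plan (an empirical process indexed by $x$, Parseval, a Bernstein inequality for mixing sequences), not a proof, so the stated rate is not established. A second, unflagged gap: the ``super-polynomial smallness of $\mathbb{P}_\theta(A_n^c)$'' you invoke is not supplied by Theorem~\ref{Thm:ECF}, which is an expectation bound; Markov's inequality converts it into at best a polynomial bound, and with your choice $\delta_n\asymp n^{-1/2}\log^{1/2}n$ it yields $\mathbb{P}_\theta(A_n^c)=O(1)$, i.e.\ nothing. The paper never needs such an event: its bad-event term $A_{n,2}$ carries only a factor $h_n^{-1}$, so the Markov bound $\mathbb{P}_\theta\bigl(\inf_{|u|\leq h_n^{-1}}|\hat\varphi_n(u)|\leq\kappa_n\bigr)\lesssim n^{-1/2}h_n^{-\lambda/\alpha}\log^{1/2}n$ suffices, and everything else is done directly in expectation.

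That said, your identification of the derivative term as ``the crux'' is more accurate than you may realize. The paper handles it by exactly the crude bound you rejected: it bounds $\|\theta_n^0-\hat\theta_n\|_\infty$ by $\frac{2\alpha h_n^{-1}}{\lambda\pi}\kappa_n^{-1}\bigl(\sup|\varphi_{X_0}'/\varphi_{X_0}|\sup|\epsilon|+\sup|\epsilon'|\bigr)$ and invokes Corollary~\ref{ECFbounds:Coro}, obtaining $h_n^{-1}\kappa_n^{-1}n^{-1/2}\log^{1/2}n\asymp n^{-(2s-1)/(4s+2+4\lambda/\alpha)}\log^{1/2}n$. However, the display that launches this chain writes the error as $\frac{\alpha}{\lambda\pi}\bigl|\int_{-h_n^{-1}}^{h_n^{-1}}(\varphi_{X_0}'/\varphi_{X_0}-\hat\varphi_n'/\hat\varphi_n\,\mathbbm{1})\,\rmd u\bigr|$, silently discarding the factor $u$: by \eqref{ESTIM} and \eqref{eq:characYcharacX} the true integrand is $\frac{1}{2\pi}\rme^{-\rmi xu}\frac{\alpha u}{\lambda}(\varphi_{X_0}'/\varphi_{X_0}-\hat\varphi_n'/\hat\varphi_n\,\mathbbm{1})$, so the honest crude bound carries $h_n^{-2}$, not $h_n^{-1}$. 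The extra $|u|\leq h_n^{-1}$ is harmless on the $\epsilon$ piece, since $u\varphi_{X_0}'/\varphi_{X_0}=\frac{\lambda}{\alpha}(\varphi_{Y_0}-1)$ is bounded (your observation that this part is fine), but on the $\epsilon'$ piece it costs precisely the extra power of $h_n^{-1}$ you feared: the crude bound then delivers only $n^{-1/2}h_n^{-(2+\lambda/\alpha)}\asymp n^{-(2s-3)/(4s+2+4\lambda/\alpha)}$, weaker than \eqref{eq:main_th}. So your attempt is incomplete at this step, but the step you could not complete is the very one the published proof resolves only through this dropped factor; a fully rigorous proof of the stated rate requires either a finer argument of the kind you sketch or a revised bandwidth and rate.
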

\begin{proof}
See Section \ref{sec:gamma-choosen}.
\end{proof}
\begin{remark}
The constant $C$ in Theorem \ref{theorem22} might be adaptively chosen. Indeed, the well
known relationship (see \cite{daley1988introduction}[Chapter 6] for example) between the cumulant function of a filtered Poisson process and its intensity measure implies that the mean $\mu_\theta$ of $X_0$ is given by
\begin{equation}
\label{def:VarianceShot}
\mu_\theta= \lambda\espt{Y_0}\int_0^\infty \rme^{-\alpha s}\rmd s =\frac{\lambda}{\alpha}\espt{Y_0} \leq \frac{\lambda}{\alpha}K^{1/(4+m)} \;.
\end{equation}
Since $\mathbf X$ is ergodic (see Section \ref{section:err_bounds}),  the empirical mean  $\hat \mu_{n}$  of the sample
$X_1,\cdots,X_n$ converges to $\mu_\theta$ almost surely and thus, replacing $C$ in Theorem \ref{theorem22} by $\hat C_n \defi \exp(-\hat \mu_n)/2$, leads to the same rate of convergence since
$$
\lim_{n\to \infty} \hat C_n = \exp\left(-\frac{\lambda}{\alpha}\espt{Y_0}\right)/2 \in (0,C_{K,L,m,\lambda/\alpha}) \quad\text{ a.s.}
$$
\end{remark}
\begin{remark}
This theorem provides that the error in uniform norm converges at least at a polynomial rate $n^{-(2s-1)/(4s+2+8\lambda/\alpha)}\log(n)^{1/2}$ that depends both on the quantity $\lambda/\alpha$ and the smoothness coefficient $s$. For a given ratio $\lambda/\alpha$, the convergence rate becomes faster as $s$ increases and tend to behave as $n^{-1/2}$ when $s\to\infty$. On the other side, for a given smoothness parameter $s$, the rate of convergence decreases when the ratio $\lambda/\alpha$ tends to infinity. This can be interpreted as the consequence of the \textit{pileup} effect that occurs whenever the intensity $\lambda$ is large or the impulse response coefficient $\alpha$ is close to zero.
\end{remark}

\begin{remark}
  Based on the previous theorem, one might wonder whether the rates of
  convergence are optimal. According to similar but not identical problems
  (\cite{neumann2009nonparametric},\cite{gugushvili2009nonparametric}) in which
  authors estimate in a nonparametric fashion a L\'evy triplet (with finite
  activity) based on a low frequency sample of the associated process, the
  optimal rates of convergence are identical to ours. Our estimation procedure
  lies on stationary but dependent infinitely divisible random variables
  associated to an infinite L\'evy measure so that these results do not apply
  here. However we believe
  that the rates obtained in Theorem \ref{theorem22} are also optimal in this
  dependent context. The proof of this conjecture is left for future work.
\end{remark}

\section{Experimental results}\label{section:num_res}

The estimation procedure based on the estimator $\hat \theta_n$ given by \eqref{ESTIM} can be made time-efficient and thus well suited to a very large dataset. In nuclear applications, it is usual to deal with several million of observations while the intensity of the time-arrival point process can reach several thousand of occurrences per second. Typically, the shot-noise process in nuclear applications corresponding to the electric current is discretely observed for three minutes at a sampling rate of 10Mhz and the mean number of arrivals between two observations lies between 10 and 100. Such large values for the intensity and the number of sampled points motivate us to present a practical way to compute the estimator~(\ref{ESTIM}).

\subsection*{Practical computation of the estimator} 
In Section \ref{sec:main-result}, we have defined the estimator of mark's density by \eqref{ESTIM}. Although it theoretically converges to the true density of shot-noise marks, the evaluation of the empirical characteristic function and its derivative based on observations $X_1,\cdots,X_n$ might be time-consuming when the sample size $n$ is large. To circumvent this issue, we propose to compute the empirical characteristic function using the fast fourier transform of an appropriate histogram of the vector $X_1,\cdots,X_n$. More precisely, for a strictly positive fixed $h$, we consider the grid $G = \lbrace hl : \lfloor \min_{k \leq n}(X_k)/h\rfloor \leq l \leq \lceil \max_{k \leq n}(X_k)/h\rceil \rbrace $ and compute the normalized histogram $H$ of the sample sequence $(X_l)_{1\leq l\leq n}$ with respect to the grid $G$ defined by
\begin{equation*}
H(l)=\frac{1}{n}\sum_{k=1}^n 1_{ [G(l) ; G(l+1)]}(X_k)\text{,}\quad \lfloor\min_{k \leq n}(X_k)/h\rfloor \leq l \leq \lceil \max_{k \leq n}(X_k)/h\rceil -1 \;.
\end{equation*}

Denoting $m_n \defi \lfloor\min_{k \leq n}(X_k)/h\rfloor$ and $M_n \defi \lceil \max_{k \leq n}(X_k)/h\rceil -1$, remark that for every real $u$, we have:
\begin{equation*}
\hat{\varphi}_n(u)= \frac{1}{n}\sum_{k=1}^n \rme^{\rmi uX_k} =\frac{1}{n}\sum_{k=1}^n\sum_{l=m_n}^{M_n} 1_{ [G(l) ; G(l+1)]}(X_k) \;\rme^{\rmi uX_k} \;.
\end{equation*}
Replacing $1_{ [G(l) ; G(l+1)]}(X_k) \;\rme^{\rmi uX_k}$ by $1_{ [G(l) ; G(l+1)]}(X_k) \;\rme^{\rmi uh(l+1/2)}$ for any real $u$, we get an approximation of the empirical characteristic function by defining\begin{equation}
\label{def:ECFhisto}
\hat{\varphi}_{h,n}(u) \defi \sum_{l=m_n}^{M_n} H(l)\rme^{\rmi uh(l+1/2)}\;.
\end{equation}
For any real $u$, we have the following upper bounds \begin{equation*}
\left|\hat{\varphi}_{h,n}(u)  - \hat{\varphi}_n(u) \right| \leq \frac{h}{2}|u| 
\end{equation*} and \begin{equation*}
\left|\hat{\varphi}'_{h,n}(u)  - \hat{\varphi}'_n(u) \right| \leq \frac{h}{2}\left( 1+|u|h\sum_{l=m_n}^{M_n}H(l)\;(l+1/2)\right) \quad \;,
\end{equation*} showing that the approximations are close to the true functions for small values of $h$ and $u$. From these empirical characteristic functions, we construct an estimator of the marks' characteristic function $\varphi_Y$ setting for any positive $u$:
\begin{equation*}
\hat{\varphi}_{Y,h,n}(u) \defi 1+ \frac{\alpha}{\lambda} u \frac{\hat{\varphi}'_{h,n}(u)}{\hat{\varphi}_{h,n}(u)} \mathbbm{1}_{\left|\hat \varphi_{h,n}(u)\right|>\kappa_n} \;.
\end{equation*}
The advantage of using $\hat{\varphi}_{h,n}$ is that $\hat \varphi_{h,n} (u)$ and $\hat \varphi_{h,n}'(u)$ can be evaluated on a regular grid using the fast Fourier Transform algorithm. 

The last step in the numerical computation of the estimator \eqref{ESTIM} consists in evaluating the quantity \begin{equation*}\int_0^{h_n^{-1}} \rme^{-\rmi xu}\hat{\varphi}_{Y,h,n}(u) \rmd u = \int_0^\infty \rme^{-\rmi xu}\hat{\varphi}_{Y,h,n}(u)1_{\left[0,h_n^{-1}\right]}(u) \rmd u\;.
\end{equation*}
Using the Inverse fast Fourier Transform, we  approximate the integral on a regular grid $x \in$ by a Riemann sum.
%The last step in the numerical computation of the estimator \eqref{ESTIM} consists in evaluating the quantity \begin{equation*}\int_0^{h_n^{-1}} \rme^{-\rmi xu}\hat{\varphi}_{Y,h,n}(u) \rmd u = \int_0^\infty \rme^{-\rmi xu}\hat{\varphi}_{Y,h,n}(u)1_{\left[0,h_n^{-1}\right]}(u) \rmd u. 
%\end{equation*}
%In the algorithm implementation, we replaced the function $\mathbbm{1}_{\left[0,h_n^{-1}\right]}$ by a $\mathcal{C}^\infty$ function which support included in the compact set $\left[0,h_n^{-1}\right]$ and takes the value $1$ in zero. There are plenty of functions satisfying these properties but we chosed 
%\begin{equation*}
%w_n(u) = \begin{cases} 1 &\text{ if } u \leq K_n \\ \frac{1}{1+\exp(\tan\left(\frac{\pi}{2}+\pi(\frac{u-K_n}{a_n})\right)} &\text{ if } u \in (K_n,K_n+a_n) \\0 &\text{ else} \end{cases}
%\end{equation*}
%where the sequences $(K_n)_n$ and $(a_n)_n$ only depend on the sequence $(h_n)_n$ in the following way
%\begin{align*}
%&a_n = h_n^{-1}/10 \quad , \quad n\in \N \\
%&K_n= h_n^{-1} -a_n \quad , \quad n\in \N
%\end{align*}
\subsection*{Numerical results}
We now illustrate the finite sample behavior of our estimator on a simulated data set when the marks (in keV energy units) density follows a Gaussian mixture $\sum_{i=1}^3 p_i\mathcal{N}_{\mu_i,\sigma_i^2}(x)$ with
%\begin{footnotesize}
\[p= \left[
0.3\text{ }0.5\text{ }0.2\right] \quad \text{,}  \quad \mu= \left[
4\text{ }12\text{ }22\right]\quad \text{,}  \quad \sigma= \left[
1\text{ }1\text{ }0.5\right]\;.\]%\end{footnotesize}
Furthermore, in order to fit with nuclear science applications, where detectors
have a time resolution of 10 Mhz, corresponding to a sampling time
$\Delta=10^{-7}$ seconds. The parameters of the experiment are set to
$\alpha=8.10^8$ , $\lambda=10^9$. Moreover, in nuclear spectrometry, the bandwidth $h_n$ is directly related to the known precision of the measuring instrument. For the following numerical experiment, we set it to 2.5 which is in range with the detector resolution as described in \cite{knoll1989radiation}{, Chapter 4}. Figure \ref{fig:simulshot} below shows a
simulated sample path of such a shot-noise with its associated marked point
process.
 \begin{figure}[H]
 \begin{center}
 \resizebox{120mm}{!}{\includegraphics{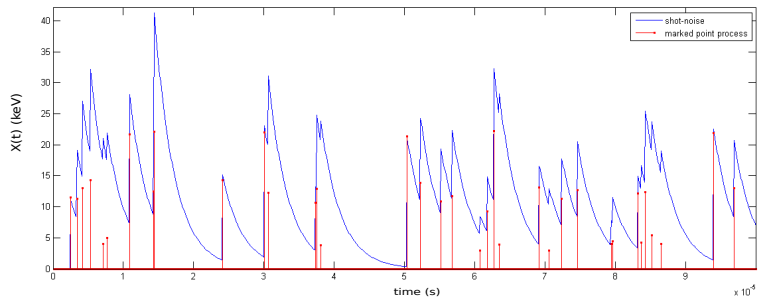}}
 \caption{Simulated Shot-Noise}\label{fig:simulshot}
 \end{center}
 \end{figure}
 As shown in Figure \ref{fig:GaussianEstim} below, our estimator $\hat
 \theta_n$ defined by \eqref{ESTIM} well retrieves the three modes of the
 Gaussian mixture as well as the corresponding variance from a sample of size
 $10^5$, which corresponds to a signal observed for one hundredth
 second. Current estimators used in nuclear spectrometry for similar data
 requires much longer measurements (up to 10 seconds). The reason is that these
 estimators do not consider observations where pile-up is suspected to occur, thus
 throwing away a large part of the available information.
\begin{figure}[H] 
\begin{center}
\resizebox{120mm}{!}{\includegraphics{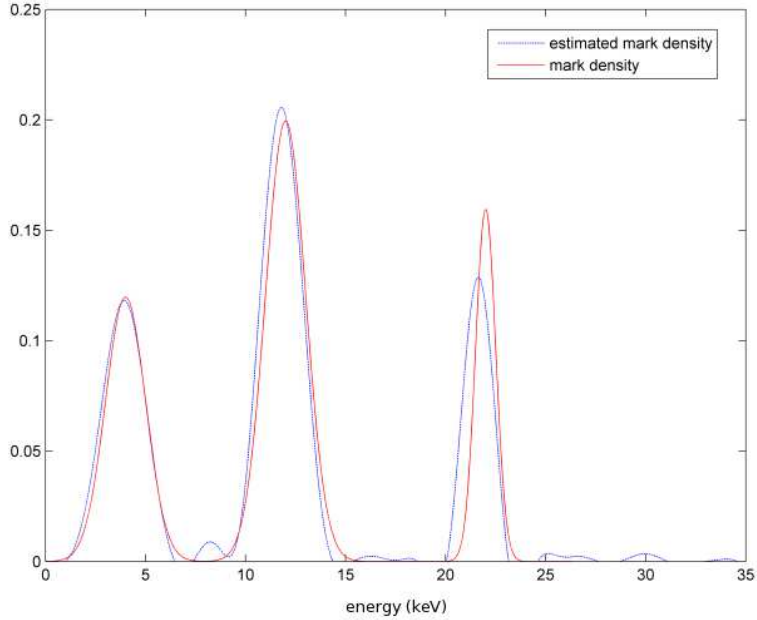}}
\caption{Gaussian mixture case} \label{fig:GaussianEstim}
\end{center}
\end{figure}
Moreover, an estimation of the risk $\espt{\|\theta-\hat{\theta}_n\|_\infty}$ is
provided in Table~\ref{tab:error} for the shot-noise configuration described at the beginning of this section and for three different sample sizes.
\begin{table}[H]
	\begin{center}
	\begin{tabular}{|c||c|c|}
		\hline 
		Sample size $(n)$ &  Mean $\ell^\infty$ error (variance) \\
		\hline
		$10^4$ & $0.1015$  ($5,10.10^{-4}$) \\ 
		\hline
		$10^5$ & $0.0741$ ($9,51.10^{-5}$) \\
		\hline
		$10^6$ & $0.0622 $ ($9,38.10^{-6}$) \\
		\hline
	\end{tabular}
\end{center}
	\caption{Mean  $\ell^\infty$ error  $\espt{\|\theta-\hat{\theta}_n\|_\infty}$ for
          different sample sizes $n$ with $\theta$ given by the Gaussian mixture
          defined above. The displayed values are obtained by averaging
          $\|\theta-\hat{\theta}_n\|_\infty$ over 100 Monte Carlo
          simulations. The variance of the error over the Monte Carlo runs is indicated between
          parentheses to assess the precision of the given estimates.} \label{tab:error} 
      \end{table}
      
\section{Error bounds for the empirical characteristic function and its derivatives}\label{section:err_bounds}
% Section sur le beta-mixing

To derive Theorem~\ref{theorem22}, since our constructed estimator involves the
empirical characteristic function and its derivative, we rely on deviation
bounds for
\begin{equation} \label{ExpectationBounds}
\espt{\sup_{u \in \left[-h^{-1},h^{-1}\right]} \left| \hat{\varphi}_n^{(k)}(u)-\varphi^{(k)}(u)\right|} \quad , \quad k =0,1\quad , \quad h >0\;,
\end{equation} 
which are uniform over $\theta\in\Theta(K,L,s,m)$, where the smoothness class
$\Theta(K,L,s,m)$ is defined by \eqref{eq:smoothness-class}. Here, $\varphi^{(k)}$ and $\hat\varphi_n^{(k)} $ respectively denote the $k$-th derivative of the characteristic function and its empirical counterpart associated to the sample $X_1,\dots,X_n$. These bounds are of
independent interest and therefore are stated in this separate section.  Upper
bounds of the empirical characteristic function deviations have been derived in
the case of i.i.d. samples: \cite{gugushvili2009nonparametric}[Theorem 2.2.]
provides upper bounds of \eqref{ExpectationBounds} for i.i.d. infinitely
divisible random variables, based on general deviation bounds for the empirical
process of i.i.d. samples found in \cite{van1996weak}. Here we are concerned
with a dependent sample $X_1,\cdots,X_n$ and we rely instead on
\cite{doukhan1995invariance}. We obtain upper bounds with the same rate of
convergence as in the i.i.d. case but depending on the $\beta$-mixing
coefficients, see Theorem~\ref{Thm:ECF}. An additional difficulty in the
non-parametric setting that we consider is to derive upper bounds that are
uniform over smoothness classes for the density $\theta$, and thus to carefully
examine how the $\beta $ coefficients depend on $\theta$, see Theorem~\ref{thm:mixing:uniform-geometricbound}.

Let us first recall the definition of \( \beta \)-mixing coefficient (also
called absolutely regular or completely regular coefficient) as introduced by
Volkonskii and Rozanov \cite{volkonskii1959some}. For \(\mathcal{A},
\mathcal{B}\) two \(\sigma\)-algebras of $\Omega$, the coefficient \( \beta (\mathcal{A}, \mathcal{B} )  \) is defined by
\begin{equation*}
\beta (\mathcal{A}, \mathcal{B} ) \defi \frac{1}{2}\sup \sum_{(i,j) \in I\times J } \left| \mathbb{P}(A_i \cap B_j) - \mathbb{P}(A_i)\mathbb{P}(B_j) \right| 
\end{equation*}
the supremum being taken over all finite partitions \( (A_i)_{i \in I}\) and \(
(B_j)_{j \in J} \) of $\Omega$ respectively included in $\mathcal{A}$ and
$\mathcal{B}$. When dealing with a stochastic process \( (X_t)_{t \geq 0} \),
the \( \beta\)-mixing coefficient is defined for every positive \( s \) by:
\begin{equation*}
\beta(s) \defi \sup_{t \geq 0} \beta \left(\sigma\left( X_u , u \leq t \right), (\sigma\left( X_{s+u} , u \geq t \right) \right)
\end{equation*}
The process \( (X_t)_{t \geq 0} \) is said to be \( \beta\)-mixing if \(\lim_{
  t \to \infty} \beta(t) =0 \) and exponentially \( \beta \)-mixing if there
exists a strictly positive number \(a \) such that \(\beta(t)
= O(e^{-at}) \) as $t\to \infty$.

We first state a result essentially following from \cite{doukhan1995invariance} which specify how the $\beta$ coefficients allows us to derive bounds on the estimation of the characteristic function and its derivatives.

\begin{theorem}\label{Thm:ECF}
 Let $k$ be a non-negative integer and $X_1,\cdots,X_n$ a sample of a
 stationary $\beta$-mixing process. Suppose that there exists $C\geq1$ and $\rho\in(0,1)$ such that $\beta_n\leq C\rho^n$
for all $n\geq1$. Let $r>1$ and suppose
  that $\esp{|X_1|^{2(k+1)r}} <\infty$. 
  
Then there exists a constant $A$ only depending on $C,\rho$ and $r$ such that
for all $h>0$ and $n\geq1$, we have
\begin{multline}\label{ineq:MaximalIneqBetaMixing}
\esp{\sup_{u \in \left[-h^{-1},h^{-1}\right]} \left|
    \hat{\varphi}_n^{(k)}(u)-\varphi^{(k)}(u)\right|} \\ \leq
A\frac{\max\left(\esp{|X_1|^{2kr}}^{1/2r},\esp{|X_1|^{2(k+1)r}}^{1/2r}\right)\left(1+\sqrt{\log(1+h^{-1})}\right)}{n^{1/2}} \;.
\end{multline}

\end{theorem}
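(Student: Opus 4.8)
The plan is to recognize $\hat\varphi_n^{(k)}(u)-\varphi^{(k)}(u)$ as a rescaled empirical process indexed by $u$ and to apply a maximal inequality for absolutely regular sequences. Setting $f_u(x)\defi(\rmi x)^k\rme^{\rmi ux}$, one has $\hat\varphi_n^{(k)}(u)=n^{-1}\sum_{j=1}^n f_u(X_j)$ and $\varphi^{(k)}(u)=\E[f_u(X_1)]$, so that
\[
\hat\varphi_n^{(k)}(u)-\varphi^{(k)}(u)=n^{-1/2}\,\nu_n(f_u),\qquad \nu_n(f)\defi n^{-1/2}\sum_{j=1}^n\bigl(f(X_j)-\E[f(X_1)]\bigr).
\]
Consequently $\sup_{|u|\le h^{-1}}|\hat\varphi_n^{(k)}-\varphi^{(k)}|=n^{-1/2}\sup_{f\in\mathcal F_h}|\nu_n(f)|$ with $\mathcal F_h\defi\{f_u:|u|\le h^{-1}\}$ (treating real and imaginary parts separately), and it suffices to bound $\E[\sup_{\mathcal F_h}|\nu_n|]$ by an entropy integral for $\mathcal F_h$ and then divide by $n^{1/2}$.

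Two structural features of $\mathcal F_h$ drive the estimate. First, the class has envelope $F(x)=|x|^k$, with $\|F\|_{2r}=\E[|X_1|^{2kr}]^{1/2r}$, which controls the $L^{2r}$-diameter of $\mathcal F_h$. Second, since $\partial_u f_u(x)=(\rmi x)^{k+1}\rme^{\rmi ux}$, one has the Lipschitz bound $|f_u(x)-f_{u'}(x)|\le|u-u'|\,|x|^{k+1}$, so a $\delta$-net of the interval $[-h^{-1},h^{-1}]$ induces an $L^{2r}$-net of $\mathcal F_h$ of mesh $\delta\,\E[|X_1|^{2(k+1)r}]^{1/2r}$. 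Covering an interval of length $2h^{-1}$ therefore yields a covering number $N(\epsilon,\mathcal F_h,\|\cdot\|_{2r})\lesssim h^{-1}\E[|X_1|^{2(k+1)r}]^{1/2r}/\epsilon$, whence $\log N(\epsilon)\lesssim\log(1/\epsilon)+\log(1+h^{-1})$. This is precisely where the second moment scale $2(k+1)r$ enters, and it explains why the final bound features $\max\bigl(\E[|X_1|^{2kr}]^{1/2r},\E[|X_1|^{2(k+1)r}]^{1/2r}\bigr)$: the envelope contributes the $k$-th scale and the chaining/Lipschitz control contributes the $(k+1)$-th.

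Because $\beta_n\le C\rho^n$ is summable, the absolutely regular sample behaves, up to constants depending only on $C,\rho,r$, like an i.i.d. one, and the margin $r>1$ ensures the relevant mixing series (coefficients raised to the exponent dictated by $r$) converge. Invoking the maximal/moment inequality of \cite{doukhan1995invariance}, combined with a chaining argument along the nets above in the spirit of the i.i.d. treatment of \cite{van1996weak} used by \cite{gugushvili2009nonparametric}, gives
\[
\E\Bigl[\sup_{f\in\mathcal F_h}|\nu_n(f)|\Bigr]\lesssim\int_0^{D}\sqrt{\log N(\epsilon,\mathcal F_h,\|\cdot\|_{2r})}\,\rmd\epsilon,\qquad D\lesssim\E[|X_1|^{2kr}]^{1/2r}.
\]
Using $\sqrt{\log(1/\epsilon)+\log(1+h^{-1})}\le\sqrt{\log(1/\epsilon)}+\sqrt{\log(1+h^{-1})}$, the first summand integrates to a finite constant while the second contributes $\sqrt{\log(1+h^{-1})}\times D$; together they produce the factor $\bigl(1+\sqrt{\log(1+h^{-1})}\bigr)$ and the moment maximum, and dividing by $n^{1/2}$ yields \eqref{ineq:MaximalIneqBetaMixing}.

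The main obstacle is the passage from the i.i.d. chaining bound to the $\beta$-mixing setting while preserving both the $n^{-1/2}$ rate and constants uniform in $h$ and $n$. I expect this to proceed through the coupling/blocking device underlying \cite{doukhan1995invariance} (Berbee-type coupling), replacing dependent blocks by independent surrogates at a cost controlled by $\sum_n\beta_n<\infty$, after which the decoupled block sums are handled by van der Vaart--Wellner chaining. The technically demanding points are verifying that the unbounded envelope $|x|^k$ is admissible in the mixing moment inequality and tracking how $r>1$ interacts with the moment hypotheses so that the constant $A$ depends only on $C,\rho,r$; the geometric decay of $\beta_n$ is what makes the effective behaviour parametric, whereas merely polynomial mixing would degrade the rate.
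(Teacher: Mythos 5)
Your proposal is correct and follows essentially the same route as the paper: the same reduction to an empirical process indexed by $u\in[-h^{-1},h^{-1}]$ with envelope $|x|^k$, the same Lipschitz-in-parameter bound $|f_u(x)-f_{u'}(x)|\leq|u-u'|\,|x|^{k+1}$ driving the entropy estimate $\log N\lesssim \log(1+h^{-1}/\epsilon)$, and the same appeal to Doukhan--Massart--Rio for the $\beta$-mixing maximal inequality, with the moment maximum and the factor $1+\sqrt{\log(1+h^{-1})}$ arising exactly as you describe. The only difference is bookkeeping: what you flag as ``the main obstacle'' (transferring the chaining bound to the dependent setting) is precisely what the paper takes off the shelf as Theorem~\ref{ThmDRM}, which is stated in terms of bracketing entropy for the $\|\cdot\|_{2,\beta}$ norm rather than your $L^{2r}$ covering numbers; the two are reconciled via the comparison \eqref{ineq:BetaNormVSLpSpaces} (this is where your ``mixing series dictated by $r$'' enters) together with the bracket construction of van der Vaart--Wellner for Lipschitz classes, so no fresh coupling argument is needed.
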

\begin{proof}
The proof is deferred to Section~\ref{sec-proof-empriical}
\end{proof}
It turns out that the stationary exponential shot-noise process $\textbf{X}$
defined by~\eqref{eq:def-Xt} is exponentially $\beta$-mixing if the
first absolute moment of the marks is finite, see
\cite{masuda2004multidimensional}[Theorem 4.3] for a slightly more general
condition. However, in order to obtain a uniform bound of the risk of our
estimator $\hat \theta_n$ over a smoothness class, a more precise result is
needed. In the sequel, we add a superscript $\theta$ to the $\beta$-mixing
sequence to make explicit the dependence with respect to the mark's density
$\theta$. The following theorem provides a geometric bound for the
$\beta$-mixing coefficients of the shot noise which is uniform over the class
$\Theta(K,L,s,m)$.

\begin{theorem}
\label{thm:mixing:uniform-geometricbound}
Let $X_1,\cdots,X_n$ be a sample of the stationary shot-noise process given by
\eqref{eq:def-Xt} satisfying \ref{ass:poisson}-\ref{cond:IRexpo}. Let $K,L,m>0$
and $s>1/2$.  Then there exist two constants $C>0$ and $\rho \in (0,1)$ only
depending on $\lambda$, $\alpha$, $K$, $L$, $s$ and $m$ such that, for all $n\geq1$,
\begin{equation}\label{Thm:UniformMixing}
 \sup_{\theta\in\Theta(K,L,s,m)}\beta_n^\theta \leq C\rho^n < \infty \;.
\end{equation}
\end{theorem}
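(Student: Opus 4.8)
The plan is to exploit the Markovian and linear structure of the exponential shot noise. The sampled sequence \((X_n)_{n\ge1}\) is a time‑homogeneous Markov chain with transition kernel \(P_\theta\) (the time‑one transition of the O‑U process~\eqref{eq:def-Xt}) and stationary law \(\pi_\theta\). For a stationary Markov chain the \(\beta\)-mixing coefficient admits the representation \(\beta_n^\theta=\E_{\pi_\theta}\big[\|P_\theta^n(X_0,\cdot)-\pi_\theta\|_{\mathrm{TV}}\big]\), so it suffices to establish a geometric bound \(\|P_\theta^n(x,\cdot)-\pi_\theta\|_{\mathrm{TV}}\le M\rho^n(1+|x|)\) with \(M>0\) and \(\rho\in(0,1)\) depending only on \(\lambda,\alpha,K,L,s,m\), and then integrate against \(\pi_\theta\) (whose first moment will be controlled uniformly). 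I would obtain such a bound by a coupling argument that combines a drift condition, the deterministic contraction coming from the linear drift, and a uniform local coupling step, thereby refining the non‑uniform result of \cite{masuda2004multidimensional}.

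First, the drift. Writing \(X_1=\rme^{-\alpha}X_0+J_1\) with \(J_1=\sum_{k:T_k\in(0,1]}Y_k\rme^{-\alpha(1-T_k)}\), the bound \(\E_\theta|J_1|\le \tfrac{\lambda}{\alpha}\E_\theta|Y_0|\le \tfrac{\lambda}{\alpha}K^{1/(4+m)}\) gives, for \(V(x)=1+|x|\), the geometric drift \(P_\theta V\le \rme^{-\alpha}V+b\) with \(b\) uniform over \(\Theta(K,L,s,m)\); this also yields \(\E_{\pi_\theta}[V]\le b/(1-\rme^{-\alpha})\) uniformly and fixes a small set \(C=[-R,R]\). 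The crucial gain from the linear equation is that two copies driven by the same L\'evy process satisfy \(|X_n-X_n'|=\rme^{-\alpha n}|X_0-X_0'|\) exactly, so a synchronous coupling contracts the gap deterministically and geometrically.

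The heart of the proof is a uniform local coupling. On the event that exactly one jump occurs in a unit interval (probability \(\lambda \rme^{-\lambda}\)), the increment equals \(Z=\rme^{-\alpha(1-U)}Y_0\) with \(U\sim\mathrm{Unif}[0,1]\), whose density \(f_Z(z)=\int_{\rme^{-\alpha}}^1 (\alpha w^2)^{-1}\theta(z/w)\,\rmd w\) is a scale mixture of \(\theta\). I would show \(f_Z\) lies in a fixed Sobolev ball uniformly over the class: since \(\widehat{f_Z}(u)=\int_{\rme^{-\alpha}}^1(\alpha w)^{-1}\varphi_{Y_0}(uw)\,\rmd w\), Jensen's inequality, a change of variables and the elementary bound \((1+v^2/w^2)^s\le \rme^{2\alpha s}(1+v^2)^s\) for \(w\ge \rme^{-\alpha}\) give \(\int(1+u^2)^s|\widehat{f_Z}(u)|^2\,\rmd u\le c(\alpha,s)\,L^2\) for a finite constant \(c(\alpha,s)\). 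Because \(s>1/2\), this controls the translation modulus \(\omega(\delta):=\sup_{|c|\le\delta}\|f_Z-\tau_c f_Z\|_1\) uniformly: split the Fourier integral of \(\|f_Z-\tau_c f_Z\|_2^2\) at a large frequency, convert to \(L^1\) on a large compact via Cauchy–Schwarz, and bound the tails using \(\E_\theta|Z|^{4+m}\le K\). Consequently \(\omega(\delta)\to0\) as \(\delta\to0\) at a rate uniform over \(\Theta(K,L,s,m)\). Since \(P_\theta(x,\rmd z)\ge \lambda\rme^{-\lambda}f_Z(z-\rme^{-\alpha}x)\,\rmd z\), for any two points with \(|x-x'|\le \epsilon\) the overlap of the one‑step kernels is at least \(\lambda\rme^{-\lambda}\big(1-\tfrac12\omega(\rme^{-\alpha}\epsilon)\big)\), a quantity bounded below by a uniform \(\delta_0>0\) that depends only on the gap, not on the absolute positions.

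Finally, I would assemble these ingredients into an \(m\)-step pairwise minorization on \(C\): for \(x,x'\in[-R,R]\), run \(k_0=\lceil\alpha^{-1}\log(2R/\epsilon)\rceil\) synchronous steps to force the gap below \(\epsilon\), then apply the local coupling, so that \(\|P_\theta^{k_0+1}(x,\cdot)-P_\theta^{k_0+1}(x',\cdot)\|_{\mathrm{TV}}\le 1-\delta_0\) uniformly. The drift condition together with this \(m\)-step pairwise minorization is exactly the input of the standard coupling construction for geometrically ergodic chains, which returns \(\|P_\theta^n(x,\cdot)-\pi_\theta\|_{\mathrm{TV}}\le M\rho^n(1+|x|)\) with \(M,\rho\) functions only of \(\rme^{-\alpha},b,\delta_0,k_0\); integrating against \(\pi_\theta\) yields~\eqref{Thm:UniformMixing}. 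I expect the main obstacle to be the uniformity in the third step: an arbitrary density in \(\Theta(K,L,s,m)\) may be sharply peaked or supported far from the origin, so the transition kernel admits no position‑independent lower bound. The resolution is that one only needs the translation‑invariant \(L^1\)-modulus of the smoothing kernel \(f_Z\), and the Sobolev constraint in \(\Theta(K,L,s,m)\) is precisely what makes this modulus small uniformly over the class.
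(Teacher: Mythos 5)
Your proposal is correct, but it reaches the uniform geometric bound by a genuinely different route than the paper. Both arguments share the same skeleton: the Davydov representation $\beta_n^\theta=\int\|Q_\theta^n(x,\cdot)-\pi_\theta\|_{\mathrm{TV}}\,\pi_\theta(\rmd x)$, the uniform drift $Q_\theta V\leq \rme^{-\alpha}V+b$ with $V(x)=1+|x|$, and the uniform bound on $\pi_\theta(V)$ to integrate at the end. The divergence is in the ``small set'' step. The paper establishes a true Doeblin minorization $Q_\theta(x,\cdot)\geq\epsilon\nu_\theta$ on a level set of $V$: it uses the Sobolev embedding (Lemma~\ref{lemma:SobolCond}) to get uniform H\"older continuity of $\theta$, the moment bound to locate an interval of uniform length on which $\theta$ is bounded below (Lemmas~\ref{ProofMixing:LemmaMinorizationTheta} and~\ref{lem:theta-check-min}), a convolution lemma (Lemma~\ref{Lemma:ConvolutionOfFunctions}) to transfer this to the innovation density, and then plugs into the off-the-shelf quantitative ergodicity result of Proposition~\ref{Prop:GeometricErgodicity}. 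You instead prove a \emph{pairwise} (pseudo-small) condition: exact deterministic contraction of the synchronous coupling, followed by a one-step overlap bound for nearby starting points, controlled through the $L^1$-translation modulus of the one-jump innovation density $f_Z$; uniformity over $\Theta(K,L,s,m)$ comes from a uniform Sobolev bound on $\widehat{f_Z}$ via Plancherel plus the moment tail bound. Your Fourier computations and the uniformity argument are sound, and pairwise overlap on a level set together with the drift does yield $V$-geometric ergodicity with constants depending only on the drift and overlap parameters (this is the weighted-total-variation contraction argument of Hairer--Mattingly, or the pseudo-small-set coupling of Roberts--Rosenthal; the common-measure minorization is used there only through the pairwise total-variation bound, so the proof goes through verbatim). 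What each approach buys: the paper's route needs pointwise lower bounds on densities and convolution machinery but then cites a completely standard theorem; your route exploits the linear autoregressive structure more directly, avoids pointwise density lower bounds entirely (replacing them by translation equicontinuity in $L^1$, which is arguably the more natural use of the Sobolev constraint), but requires invoking the slightly less classical pseudo-small-set form of Harris' theorem — the one place where your write-up stays schematic (``standard coupling construction''), since a failed coupling attempt leaves the two copies at an uncontrolled distance and one must re-enter the contraction phase, which is exactly what the drift-plus-pairwise-minorization theorems handle.
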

\begin{proof}
See   Section \ref{section:app:beta-mixing}.
\end{proof}

 As a corollary of Theorems \ref{Thm:ECF} and \ref{thm:mixing:uniform-geometricbound}, we obtain error bounds for the empirical characteristic function when dealing with observations $X_1,\cdots,X_n$ of the stationary shot-noise process given by \eqref{eq:def-Xt}.
\begin{coro}\label{ECFbounds:Coro}
  Let $X_1,\cdots,X_n$ be a sample of the stationary shot-noise process given
  by \eqref{eq:def-Xt} satisfying \ref{ass:poisson}-\ref{cond:IRexpo}. Let
  $K,L,m>0$ and $s>1/2$ and let $k$ be an integer such that $0\leq k<1+n/2$. Then there exists a constant $B$ only depending on $k$,
  $\lambda$, $\alpha$, $K$, $L$, $s$ and $m$ such that for all and $n\geq1$,
\begin{align}
\nonumber\sup_{\theta \in \Theta(K,L,s,m)}\mathbb{E}_\theta \left[\sup_{u \in \left[-h^{-1},h^{-1}\right]} \left|
    \hat{\varphi}_n^{(k)}(u)-\varphi^{(k)}(u)\right|\right]
 \leq B\;\frac{1+\sqrt{\log(1+h^{-1})}}{n^{1/2}} \;.%\frac{\max(1,K^{4k/(4+m)})\;
\end{align} 

\end{coro}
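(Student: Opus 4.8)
The statement is a direct combination of Theorem~\ref{Thm:ECF} and Theorem~\ref{thm:mixing:uniform-geometricbound}; the only point requiring care is to control, uniformly over $\Theta(K,L,s,m)$, the absolute moments of $X_1$ appearing on the right-hand side of \eqref{ineq:MaximalIneqBetaMixing}. First I would apply Theorem~\ref{thm:mixing:uniform-geometricbound}: it provides constants $C>0$ and $\rho\in(0,1)$, depending only on $\lambda,\alpha,K,L,s,m$, such that $\sup_{\theta\in\Theta(K,L,s,m)}\beta_n^\theta\le C\rho^n$ for all $n\ge1$. Replacing $C$ by $\max(C,1)$ if necessary, we may assume $C\ge1$, so that for \emph{every} $\theta\in\Theta(K,L,s,m)$ the sample $X_1,\dots,X_n$ satisfies the mixing hypothesis of Theorem~\ref{Thm:ECF} with one and the same pair $(C,\rho)$.

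Next I would establish a moment bound uniform in $\theta$. Using the L\'evy--Khintchine representation \eqref{eq:levytripletSN} together with the exponential form of $h$ given by \ref{cond:IRexpo}, the $p$-th cumulant of $X_0$ equals $\lambda\int_0^\infty\E_\theta[(\rme^{-\alpha v}Y_0)^p]\,\rmd v=\frac{\lambda}{\alpha p}\,\E_\theta[Y_0^p]$, and consequently every absolute moment $\E_\theta[|X_1|^p]$ is bounded by a polynomial expression in the quantities $\frac{\lambda}{\alpha j}\E_\theta[|Y_0|^j]$ for $1\le j\le p$. Since membership in $\Theta(K,L,s,m)$ forces $\E_\theta[|Y_0|^{4+m}]\le K$, it follows that $\E_\theta[|X_1|^p]$ is bounded, uniformly over the class, by a constant depending only on $p,K$ and $\lambda/\alpha$ whenever $p\le 4+m$. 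Choosing $r>1$ close enough to $1$ that $2(k+1)r\le 4+m$ --- which is possible as soon as $2(k+1)<4+m$, in particular for the values $k\in\{0,1\}$ used to prove Theorem~\ref{theorem22}, since $m>0$ --- makes both $\E_\theta[|X_1|^{2kr}]$ and $\E_\theta[|X_1|^{2(k+1)r}]$ finite and uniformly bounded over $\Theta(K,L,s,m)$.

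Finally I would invoke Theorem~\ref{Thm:ECF} for each fixed $\theta\in\Theta(K,L,s,m)$ with this $r$. Its constant $A$ depends only on $C,\rho$ and $r$, hence only on $\lambda,\alpha,K,L,s,m$, while the moment factor $\max(\E_\theta[|X_1|^{2kr}]^{1/2r},\E_\theta[|X_1|^{2(k+1)r}]^{1/2r})$ is bounded by a constant independent of $\theta$. Collecting these into a single $B$ depending only on $k,\lambda,\alpha,K,L,s,m$ and taking the supremum over $\theta$ (the bound being uniform) yields the announced inequality. The main obstacle is precisely this uniform moment estimate: one has to translate the Sobolev-and-moment description of $\Theta(K,L,s,m)$ into an explicit, $\theta$-free bound on $\E_\theta[|X_1|^{2(k+1)r}]$ through the cumulant/L\'evy-measure identity, the admissibility of some $r>1$ being exactly what the constraint on $k$ secures; the remaining steps are merely bookkeeping of constants.
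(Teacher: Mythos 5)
Your proposal is correct and takes essentially the same route as the paper: the paper's proof of Corollary~\ref{ECFbounds:Coro} consists precisely of combining Theorem~\ref{Thm:ECF} with the uniform mixing bound of Theorem~\ref{thm:mixing:uniform-geometricbound}, plus the observation that the moment factor $\|F_k\|_{2r}=\mathbb{E}_\theta\bigl[|X_1|^{2kr}\bigr]^{1/2r}$ and the constant $c_{r,\bar\beta}$ are bounded uniformly over $\Theta(K,L,s,m)$. The only differences are cosmetic: the paper fixes $r=(4+m)/4$ (which suffices for the values $k\in\{0,1\}$ used in Theorem~\ref{theorem22}), while you take $r$ close to $1$ so as to cover every $k<1+m/2$, and you actually sketch the uniform bound on $\mathbb{E}_\theta\bigl[|X_1|^{2(k+1)r}\bigr]$ via the cumulant identity — rigorous as stated for integer exponents, and needing a Jensen or Rosenthal-type argument for the fractional exponent $2(k+1)r$ — whereas the paper merely asserts this bound without proof.
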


This result can be compared to \cite{gugushvili2009nonparametric}[Theorem~2.2].
Note however that although our sample
has infinitely divisible marginal distributions, it is not independent and the
L\'evy measure is not integrable.% Actually, in the particular context of
% i.i.d. r.v's $X_1,\cdots,X_n$, we have $\beta_n = \delta_{n,0}$ where $\delta$
% denotes the Kronecker symbol. It then gives for $n$ sufficiently large that
% \begin{equation}
% \esp{\sup_{u \in {\left[-h^{-1},h^{-1}\right]}} \left| \hat{\varphi}_n^{(k)}(u)-\varphi^{(k)}(u)\right|}  \leq A\frac{\esp{|X_1|^{2kr}}\sqrt{\log(|h^{-1}|)}}{n^{1/2}} \; ,
% \end{equation}
% where $A$ is the same constant as in Theorem \ref{ineq:MaximalIneqBetaMixing}.\newline

 \section{Proofs} \label{Proofs}

\subsection{Preliminary results on the exponential shot noise}
\label{sec:prel-results-expon}
We establish some geometric ergodicity results on the exponential shot noise
that will be needed in other proofs.

\begin{definition}[Geometric drift condition]
A Markov Kernel P satisfies a geometric drift condition (called $D(V,\mu,b)$) if there exists a measurable function $V : \R \to [1,\infty[$ and constants $(\mu,b) \in (0,1)\times \R_+$ such that \begin{equation*}
PV \leq \mu V + b \;.
\end{equation*}
\end{definition}

\begin{definition}[Doeblin set]
A set $C$ is called a $(m,\epsilon)$-Doeblin set if there exists a positive integer $m$ , a positive number $\epsilon$ and a probability measure $\nu$ on $\R$ such that, for any $x$ in $C$ and $A$ in $\mathcal{B}(\R)$
\begin{equation*}
P^m(x,A) \geq \epsilon \nu(A) \;.
\end{equation*}
\end{definition}
The following proposition is borrowed from \cite{tuominen1994subgeometric} and
relates explicitly the geometrical drift condition to the convergence in
$V$-norm (denoted by $|\cdot|_V$) to the stationary distribution.
\begin{prop}
\label{Prop:GeometricErgodicity}
Let $P$ be a Markov kernel satisfying the drift condition
$D(V,\mu,b)$. Assume moreover that for some $d > 2b(1-\mu)-1$ , $m \in
\N-\lbrace 0\rbrace$ and $\epsilon \in (0,1)$, the level set $ \lbrace V \leq d\rbrace$ is an
$(m,\epsilon)$-Doeblin set. Then $P$ admits a unique invariant measure $\pi$
and $P$ is $V$-geometrically ergodic, that is, for any $0<u<\epsilon/\left(b_m+\mu^md-1+\epsilon\right)_+\vee 1$ , $n \in \N$ and $x \in \R$,
\begin{equation}
\|P^n(x,\cdot)-\pi \|_V \leq c(u)[\pi(V) +V(x)]\rho^{\lfloor n/m\rfloor}(u)
\end{equation}
where
\begin{itemize}
\item  $b_m = \frac{b}{\min  V}\frac{1-\mu^m}{1-\mu}$
\item $c(u)=u^{-1}(1-u)+\mu^m+b_m$
\item $\rho(u) = \left(1-\epsilon+u(b_m+\mu^md+\epsilon-1)\right)\vee \left(1-u\frac{(1+d)(1-\mu^m-2b_m}{2(1-u)+u(1+d)}\right)$
\item $\pi(V)=\mathbb P(Z\in V)$ where $Z$ is a random variable with distribution $\pi$
\end{itemize}
\end{prop}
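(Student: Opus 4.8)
The plan is to reduce the statement to the classical Foster--Lyapunov-plus-minorization machinery applied to the $m$-step kernel $P^m$, and then to track all constants explicitly so as to recover the stated expressions for $b_m$, $c(u)$ and $\rho(u)$. First I would convert the one-step drift into an $m$-step drift: iterating $PV\le\mu V+b$ and using $V\ge\min V\ge1$ gives $P^mV\le\mu^m V+b(1-\mu^m)/(1-\mu)=\mu^m V+b_m\min V$, which is exactly the drift encoded by $b_m=\frac{b}{\min V}\frac{1-\mu^m}{1-\mu}$. Combined with the hypothesis that $\{V\le d\}$ is an $(m,\epsilon)$-Doeblin (i.e. small) set, so that $P^m(x,\cdot)\ge\epsilon\,\nu(\cdot)$ on $\{V\le d\}$, this equips $P^m$ with a genuine one-step pair: a geometric drift together with a minorization on a sublevel set of $V$. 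The role of the condition $d>2b(1-\mu)-1$ is to guarantee that this sublevel set is large enough that the drift pushes the chain back inside it with a strict margin, which is precisely what will force the eventual contraction rate to be strictly below $1$.

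The core of the argument is a contraction estimate for $P^m$ in a drift-modulated total-variation distance $\|\mu_1-\mu_2\|_u\defi\int(1+u\,V)\,\rmd|\mu_1-\mu_2|$, the free parameter $u$ being the one appearing in the statement. Following the quantitative Harris-type coupling argument, I would split according to whether a pair of points lies inside or outside $\{V\le d\}$. Outside the set, the geometric drift $P^mV\le\mu^m V+b_m\min V$ contracts the $V$-weighted part of the distance at a rate governed by $\mu^m$ and $b_m$; inside the set, the minorization allows a successful coupling with probability at least $\epsilon$, contracting the pure total-variation part at rate roughly $1-\epsilon$. Balancing these two mechanisms through the weight $u$ yields a single-step contraction $\|P^m\mu_1-P^m\mu_2\|_u\le\rho(u)\,\|\mu_1-\mu_2\|_u$ in which $\rho(u)$ is the maximum of the two competing rates; this is the origin of the $\vee$ in the definition of $\rho(u)$, with the first term $1-\epsilon+u(b_m+\mu^md+\epsilon-1)$ coming from the minorization branch and the second from the drift branch. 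The contraction is strict exactly for $u$ in the admissible interval $0<u<\epsilon/\left(b_m+\mu^md-1+\epsilon\right)_+\vee1$, which is where that range comes from.

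Once the single-step contraction is in hand, the remainder is routine. Iterating over blocks of length $m$ gives $\|P^{mk}\mu_1-P^{mk}\mu_2\|_u\le\rho(u)^k\|\mu_1-\mu_2\|_u$, so that $(P^{mk})$ is a contraction on a complete weighted-distance space and thus admits a unique fixed point $\pi$, which is then shown to be $P$-invariant by the usual argument (apply $P$ to $\pi$ and use uniqueness). Taking $\mu_1=\delta_x$ and $\mu_2=\pi$, writing $n=m\lfloor n/m\rfloor+r$ with $r<m$, and converting the weighted distance back to the $V$-norm (the $u^{-1}(1-u)$ term) while absorbing the finitely many remaining steps (the $\mu^m+b_m$ terms) into the prefactor yields $\|P^n(x,\cdot)-\pi\|_V\le c(u)[\pi(V)+V(x)]\rho^{\lfloor n/m\rfloor}(u)$ with $c(u)=u^{-1}(1-u)+\mu^m+b_m$, as claimed.

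The step I expect to be the genuine obstacle is the explicit single-step contraction estimate and, above all, the bookkeeping that produces the \emph{exact} constants $c(u)$ and $\rho(u)$ together with the sharp admissible interval for $u$; the existence and uniqueness of $\pi$ and the passage from multiples of $m$ to arbitrary $n$ are comparatively mechanical. In practice, since the result is standard, I would invoke the quantitative version of this drift-and-minorization machinery (as in \cite{tuominen1994subgeometric}) rather than reprove the contraction from scratch.
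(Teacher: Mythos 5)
Your proposal is correct and ultimately does exactly what the paper does: the paper gives no proof of this proposition at all, stating only that it is borrowed from \cite{tuominen1994subgeometric}, which is precisely the reference you invoke at the end. Your intermediate sketch (the $m$-step drift with constant $b_m$, the minorization on the sublevel set, and the $u$-weighted total-variation contraction balanced between the drift and coupling branches) is the standard machinery behind that cited result, so the two approaches coincide.
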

In order to apply such a result to the sample $(X_1,\cdots,X_n)$ of the
exponential shot-noise defined by \eqref{eq:def-Xt}, observe that it is the
sample of a Markov chain which satisfies 
the autoregression equation $ X_{i+1}=\rme^{-\alpha}X_i +W_{i+1} \;,$
where the sequence of innovations $(W_i)_{i\in\mathbb{Z}}$ is made up of i.i.d. random variables distributed as 
\begin{equation}\label{def:innovationSequence}
W_0 \defi \sum_{k=1}^{N_\lambda([0,1])} Y_k \rme^{-\alpha U_k} \;.
\end{equation}
where
\begin{itemize}
\item $N_\lambda([0,1])$ is a Poisson r.v. with mean $\lambda$,
\item $(Y_i)_{i \geq 1}$ are i.i.d. r.v.'s with probability density function $\theta$,
\item $(U_i)_{i \geq 1}$ are i.i.d. and uniformly distributed on $[0,1]$,
\item all these variables are independent.
\end{itemize}
In the following, we denote by  $Q_\theta$ the Markov
kernel associated to the Markov chain $(X_i)_{i\geq0}$ under \ref{ass:poisson}-\ref{cond:IRexpo}.

% and where the Markov kernel $Q_\theta$ satisfies
% \begin{equation}\label{def:KernelSNsample}
% Q_\theta(x,A)=\mathbb{P}_\theta\left( X_1 \in A | X_0=x\right) = \mathbb{P}_\theta\left( W_1 \in A-\rme^{-\alpha}x\right)
% \end{equation}

\begin{prop}[Uniform Geometric drift condition]\label{Proof:GeomDrift}
Let $K,L,m>0$ and $s>1/2$ and let $\theta\in\Theta(K,L,s,m)$. Then the Markov
kernel $Q_\theta$ satisfies the drift condition $D(V,\mu,b)$, where
\begin{equation}
\label{def:ProofMixing:GeomDrift}
V: x \to 1+ |x| \quad ,\quad \mu = \rme^{-\alpha}\quad,\quad b = 1+ \lambda K^{1/(4+m)}- \rme^{-\alpha} \;.
\end{equation}
\end{prop}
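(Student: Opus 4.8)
The plan is to verify the drift inequality $Q_\theta V(x)\le \mu V(x)+b$ directly from the autoregressive representation $X_{i+1}=\rme^{-\alpha}X_i+W_{i+1}$ recalled just before the statement. Since $V(x)=1+|x|$ and the innovation $W_0$ defined in \eqref{def:innovationSequence} is independent of $X_i$, applying the kernel amounts to computing $Q_\theta V(x)=\E\!\left[1+\left|\rme^{-\alpha}x+W_0\right|\right]$. By the triangle inequality, $\left|\rme^{-\alpha}x+W_0\right|\le \rme^{-\alpha}|x|+|W_0|$, so that
\[
Q_\theta V(x)\le \rme^{-\alpha}(1+|x|)+\bigl(1-\rme^{-\alpha}\bigr)+\E[|W_0|]
=\rme^{-\alpha}V(x)+\bigl(1-\rme^{-\alpha}+\E[|W_0|]\bigr).
\]
Thus, with $\mu=\rme^{-\alpha}\in(0,1)$ (valid since $\alpha>0$), it suffices to show that $\E[|W_0|]\le \lambda K^{1/(4+m)}$ uniformly over $\theta\in\Theta(K,L,s,m)$, which yields exactly the claimed value $b=1+\lambda K^{1/(4+m)}-\rme^{-\alpha}$ and ensures $b>0$.

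The first moment of the innovation is handled by bounding $\E[|W_0|]\le \E\!\left[\sum_{k=1}^{N_\lambda([0,1])}|Y_k|\,\rme^{-\alpha U_k}\right]$ and then computing this random Poisson sum. Conditioning on $N_\lambda([0,1])$ and using the independence of $N_\lambda([0,1])$, the marks $(Y_k)$ and the uniforms $(U_k)$, Wald's identity gives $\E[|W_0|]\le \lambda\,\E[|Y_1|]\,\E[\rme^{-\alpha U_1}]$, where $\E[N_\lambda([0,1])]=\lambda$. The uniform factor is evaluated explicitly, $\E[\rme^{-\alpha U_1}]=\int_0^1 \rme^{-\alpha u}\,\rmd u=(1-\rme^{-\alpha})/\alpha$, and the elementary inequality $1-\rme^{-\alpha}\le\alpha$ shows this is at most $1$. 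For the mark factor I would invoke Jensen's inequality, $\E[|Y_1|]\le \bigl(\E[|Y_1|^{4+m}]\bigr)^{1/(4+m)}\le K^{1/(4+m)}$, the last bound being precisely the moment constraint built into the definition \eqref{eq:smoothness-class} of $\Theta(K,L,s,m)$. Combining the two factors yields $\E[|W_0|]\le \lambda K^{1/(4+m)}$, completing the argument.

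This proof is essentially routine; there is no genuine analytic obstacle, only two points requiring care. The first is the handling of the random sum $W_0$, for which one must condition on the Poisson count and use the independence structure of \eqref{def:innovationSequence} so that Wald's identity applies. The second, and the reason the statement is phrased uniformly, is that the only feature of the density $\theta$ entering the bound is its $(4+m)$-th absolute moment, which is controlled by $K$ for \emph{every} $\theta\in\Theta(K,L,s,m)$; consequently the resulting constants $\mu=\rme^{-\alpha}$ and $b=1+\lambda K^{1/(4+m)}-\rme^{-\alpha}$ depend on $\theta$ only through the class parameters, as required for the subsequent application of Proposition~\ref{Prop:GeometricErgodicity} to obtain the uniform geometric mixing bound of Theorem~\ref{thm:mixing:uniform-geometricbound}.
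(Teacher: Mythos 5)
Your proof is correct and follows essentially the same route as the paper, whose one-line proof applies the triangle inequality and implicitly uses the bound $\espt{|W_0|}\leq\lambda K^{1/(4+m)}$ that you derive explicitly via Wald's identity, $\esp{\rme^{-\alpha U_1}}=(1-\rme^{-\alpha})/\alpha\leq 1$, and Jensen's inequality with the moment constraint of $\Theta(K,L,s,m)$. Your write-up simply supplies the details the paper leaves implicit.
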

\begin{proof}We have for all $\theta\in\Theta(K,L,s,m)$ and $x\in\mathbb{R}$,
\begin{align}
\nonumber Q_\theta V(x)% &= \espt{V(X_1)|X_0=x}=\espt{1\vee \left|X_1\right||X_0=x} \\
&\nonumber =\espt{1+\left|\rme^{-\alpha}x+W_0\right|}\\ %&\nonumber\leq 1\vee \left|\rme^{-\alpha}x\right| + 1+\espt{\left|W_0\right|}\\
&\nonumber \leq \rme^{-\alpha}V(x) + 1-\rme^{-\alpha} +\lambda K^{1/(4+m)} =\mu V(x) +b \;.
\end{align}
\end{proof}
\begin{remark}\label{RemarkGeometricDrift}
A similar result holds for the functions $V_i:x \to 1+|x|^i$ where $i\in\lbrace 1,\cdots,\lfloor 4+m\rfloor\rbrace$.% , $\mu_i = \rme^{-\alpha}$ and $ b_i =2-\rme^{-\alpha}+K^{i/(4+m)}$ as soon as the density $\theta$ belongs to the class $\Theta(K,L,s,m)$.
\end{remark}
\begin{prop}[Doeblin set]
\label{Proof:DoeblinSet}
Let $l>1$, $K,L,m>0$ and $s>1/2$ and define $V$ as in~(\ref{def:ProofMixing:GeomDrift}). There exists  $\epsilon>0$
only depending on $l$, $\alpha$, $\lambda$, $K,L,m>0$ and $s$ such that, for all
$\theta \in \Theta(K,L,s,m)$, the Markov kernel $Q_\theta$ admits $\{V\leq l\}$
as an $(1,\epsilon)$-Doeblin set.
\end{prop}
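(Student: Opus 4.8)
The plan is to reduce the Doeblin minorization to a uniform lower bound on a piece of the one-step transition density of the autoregression $X_{i+1}=\rme^{-\alpha}X_i+W_{i+1}$. Writing $C=\{V\leq l\}=\{|x|\leq l-1\}$, for $x\in C$ and any Borel set $A$ we have $Q_\theta(x,A)=\mathbb{P}_\theta(\rme^{-\alpha}x+W_0\in A)$, so it suffices to exhibit a single interval $J_\theta^\ast$ and a constant $\beta_0>0$, both controlled uniformly in $\theta$, such that $Q_\theta(x,A)\geq\beta_0\,|A\cap J_\theta^\ast|$ for every $x\in C$. The minorizing measure $\nu_\theta$ will then be the uniform law on $J_\theta^\ast$ and $\epsilon$ will be $\beta_0$ times a uniform lower bound on $|J_\theta^\ast|$. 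The crucial structural freedom here is that $\nu_\theta$ may depend on $\theta$; only $\epsilon$ must be uniform, which is what allows the location of $J_\theta^\ast$ to follow $\theta$.

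First I would localize the mass of $\theta$. The moment constraint $\espt{|Y_0|^{4+m}}\leq K$ gives, by Markov's inequality, $\mathbb{P}_\theta(|Y_0|\leq R_0)\geq 3/4$ for a radius $R_0$ depending only on $K,m$, while the Sobolev constraint yields, by Cauchy--Schwarz applied to the Fourier inversion $\|\theta\|_\infty\leq\frac{1}{2\pi}\int|\mathcal{F}\theta|$, a uniform bound $\|\theta\|_\infty\leq C_1L$ with $C_1=C_1(s)$ finite because $s>1/2$. The first bound confines the mass, the second prevents any concentration. Consequently at least $3/8$ of the mass lies in $[0,R_0]$ or in $[-R_0,0]$; I treat the positive case, the other being symmetric. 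Discarding a neighbourhood $[0,\eta_0]$ of the origin with $\eta_0\sim 1/L$ leaves mass at least $5/16$ in $[\eta_0,R_0]$, and covering $[\eta_0,R_0]$ by the $K^\ast=O(\alpha^{-1}\log(R_0/\eta_0))$ intervals of multiplicative width $\rme^{\alpha/2}$ followed by the pigeonhole principle produces an interval $[a,b]\subset[\eta_0,R_0]$ with $b=a\,\rme^{\alpha/2}$ and $\mathbb{P}_\theta(Y_0\in[a,b])\geq p_0$, where $p_0=(5/16)/K^\ast$ depends only on the class parameters.

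Next I would build the density floor from the explicit law of a single jump. For $Y\in[a,b]$ with $b=a\,\rme^{\alpha/2}$, the variable $Y\rme^{-\alpha U}$ has, restricted to this event, sub-density $z\mapsto\frac{1}{\alpha z}\,\mathbb{P}_\theta(Y_0\in[a,b])$ on $[a\rme^{-\alpha/2},a]$, since every $y\in[a,b]$ satisfies $y\rme^{-\alpha}\leq z\leq y$ there; this gives a floor $c_1=p_0/(\alpha R_0)$ on an interval of length $L_1=a(1-\rme^{-\alpha/2})\geq\eta_0(1-\rme^{-\alpha/2})$. Conditioning the single step on $\{N_\lambda([0,1])=n\}$ and on all $n$ marks falling in $[a,b]$, $W_0$ becomes a sum of $n$ i.i.d. such jumps, so its sub-density dominates the $n$-fold convolution $c_1^{\,n}\,\mathbbm{1}_{[a\rme^{-\alpha/2},a]}^{\,\ast n}$, which is bounded below by a positive constant on a central interval of length at least $\tfrac{1}{3}nL_1$. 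Setting $\delta=\rme^{-\alpha}(l-1)$ and choosing $n=n^\ast$ large enough that $\tfrac13 n^\ast L_1>2\delta$ even in the worst case $L_1=\eta_0(1-\rme^{-\alpha/2})$ fixes $n^\ast$ as a function of $l,\alpha,K,L,m$ only; multiplying by $\mathbb{P}(N_\lambda([0,1])=n^\ast)=\rme^{-\lambda}\lambda^{n^\ast}/n^\ast!$ then furnishes a uniform floor $\beta_0>0$ for the sub-density of $W_0$ on an interval $J_\theta$ of length $>2\delta$. Taking $J_\theta^\ast=\bigcap_{|x|\leq l-1}(\rme^{-\alpha}x+J_\theta)$, which is nonempty of length $\geq|J_\theta|-2\delta$ exactly because $|J_\theta|>2\delta$, delivers $Q_\theta(x,A)\geq\beta_0\,|A\cap J_\theta^\ast|$ for all $x\in C$, hence the claimed $(1,\epsilon)$-Doeblin set.

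The main obstacle is the geometric point hidden in the last step: because $\theta$ may concentrate near the origin (the Sobolev bound caps the height at $C_1L$ but not the location), the interval $[a\rme^{-\alpha/2},a]$ reachable by a single jump can be as short as $\sim 1/L$, which is smaller than the window $2\delta=2\rme^{-\alpha}(l-1)$ for large $l$. A minorization built from a single jump is then impossible, and one must genuinely exploit that a single step of the chain already carries a Poisson number of jumps: spreading the convolution over $n^\ast$ jumps lengthens the reachable interval past $2\delta$ at the fixed cost $\rme^{-\lambda}\lambda^{n^\ast}/n^\ast!$, keeping $\epsilon$ positive and free of $\theta$. Verifying the uniform lower bound on the $n^\ast$-fold convolution on its central block, and confirming that every constant ($R_0,\eta_0,p_0,c_1,n^\ast,\beta_0$) depends only on $(\alpha,\lambda,K,L,s,m,l)$ and never on the particular $\theta$, is the technical heart of the argument.
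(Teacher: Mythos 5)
Your proof is correct, and its overall skeleton coincides with the paper's: discard the atom $\rme^{-\lambda}\delta_0$ and work with the absolutely continuous part of the law of $W_0$; condition on the number of Poisson jumps in the unit interval; obtain a floor for the single-jump density on some interval whose length is uniform over the class; self-convolve to stretch that interval beyond $2\rme^{-\alpha}(l-1)$ so that it survives the intersection over all shifts $\rme^{-\alpha}x$, $|x|\leq l-1$; and let the fixed Poisson weight $\rme^{-\lambda}\lambda^{n^\ast}/n^\ast!$ keep $\epsilon$ uniform (the paper, like you, lets the minorizing measure, but not $\epsilon$, depend on $\theta$). Where you genuinely diverge is in how the initial floor is produced. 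The paper goes through pointwise regularity: the Sobolev embedding (Lemma~\ref{lemma:SobolCond}) gives a uniform H\"older estimate, which combined with Markov's inequality yields an interval of uniform length $\Delta$ on which $\theta$ itself is bounded below (Lemma~\ref{ProofMixing:LemmaMinorizationTheta}), and this pointwise floor is then transferred to $\check\theta$ (Lemma~\ref{lem:theta-check-min}). You instead use the Sobolev constraint only through the sup-norm bound $\|\theta\|_\infty\leq C_1 L$ (valid since $s>1/2$), combine it with Markov's inequality and a pigeonhole over a logarithmic cover to extract a multiplicative interval $[a,a\rme^{\alpha/2}]$ carrying mass $p_0$ uniformly over the class, and then convert mass into a density floor through the exact formula for the law of $Y\rme^{-\alpha U}$, namely the value $\frac{1}{\alpha z}\mathbb{P}_\theta(Y_0\in[a,b])$ on $[a\rme^{-\alpha/2},a]$. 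This is somewhat more elementary — no continuity or H\"older estimate on $\theta$ is ever invoked, the smearing by $U$ doing the regularization for free — while the paper's route extracts slightly more information (a pointwise floor on $\theta$) than is strictly needed. Both arguments ultimately rest on the same unproven-but-elementary convolution fact, the paper's Lemma~\ref{Lemma:ConvolutionOfFunctions}: your claim that $\mathbbm{1}_{[a\rme^{-\alpha/2},a]}^{\ast n}$ is bounded below on a central interval of length of order $nL_1$ is exactly an instance of it, with the floor constant controlled uniformly because $L_1\geq \eta_0(1-\rme^{-\alpha/2})$. Your bookkeeping of the constants $R_0,\eta_0,p_0,c_1,n^\ast,\beta_0$ correctly shows dependence only on $(l,\alpha,\lambda,K,L,s,m)$ and never on the particular $\theta$, which is the point of the proposition.
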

\begin{proof}
Let $\theta \in \Theta(K,L,s,m)$.  Denote by  $\check{\theta}$ the density of random variable $Y_1\rme^{-\alpha U_1}$ with $U_1$ and $Y_1$
  two independent random variables respectively distributed uniformly on
  $[0,1]$  and with density $\theta$. It is easy to show that, for all $v\in\mathbb{R}$,
\begin{align}
\label{def:ThetaCheck}
 \check{\theta}(v) = \frac{1}{\alpha v} \int_v^{v\rme^{\alpha}} \theta(y)\rmd y \; .
\end{align}
The distribution of $W_0$ is thus given by the infinite mixture
\begin{equation}
\label{def:ProofMixing:density}
\rme^{-\lambda}\delta_0(\rmd \xi) + \sum_{k=1}^\infty \frac{\lambda^k }{k!}\rme ^{-\lambda}\check{\theta}^{*k}(\xi)\rmd \xi \defi \rme^{-\lambda}\delta_0(\rmd \xi) + (1-\rme^{-\lambda})\tilde{f}_\theta(\xi)\rmd \xi \;,
\end{equation}
where $\delta_0$ is the Dirac point mass at 0 and $\check{\theta}^{*k}$ denote
the $k$-th self-convolution of $\check \theta$. It follows that, for all Borel
set $A$,
\begin{align} \label{def:ProofMixing:MarkovKernel}
\nonumber Q_\theta(x,A)% &\defi \mathbb{P}_\theta\left( X_1 \in A | X_0=x\right) = \mathbb{P}_\theta\left( W_1 \in A -\rme^{-\alpha} x\right)\\
 =\rme^{-\lambda}\1_A(\rme^{-\alpha}x)+\int_{A}(1-\rme^{-\lambda})\tilde f_\theta(\xi+\rme^{-\alpha}x)\;\rmd\xi\;.
\end{align}
In order to show that $\lbrace V \leq l\rbrace$ is a $(\epsilon,1)$-Doeblin-set for the
kernels $Q_\theta$, it is sufficient to exhibit a probability measure $\nu$
such that, for all $|x| \leq l-1$ and all Borel set $A$
\begin{equation*}
\int_{A} \tilde{f}_\theta(\xi +\rme^{-\alpha}x)\rmd \xi \geq (1-\rme ^{-\lambda})^{-1}\epsilon \;\nu\left(A\right) \;.
\end{equation*}
Hence if for each $\theta\in\Theta(K,L,s,m)$ we find $c(\theta)<d(\theta)$ such that 
\begin{equation*}%\label{eq:doeblin-to-show}
\epsilon'=\inf_{\theta\in\Theta(K,L,s,m)}\inf_{c(\theta)\leq\xi\leq d(\theta)}\inf_{|x|\leq l-1}[d(\theta)-c(\theta)]\,\tilde{f}_\theta(\xi +\rme^{-\alpha}x) >
0 \;,
\end{equation*}
 the result follows by taking $\nu$ with density $[d(\theta)-c(\theta)]^{-1}\1_{[c(\theta),d(\theta)]}$ and $\epsilon= (1-\rme ^{-\lambda})\epsilon'$.
By definition of $\tilde f_\theta$ above, it is now sufficient to show that there
exist $c(\theta)<d(\theta)$ and $k(\theta)\geq1$ such that
\begin{equation*}%\label{eq:doeblin-to-show}
\inf_{\theta\in\Theta(K,L,s,m)}\inf_{c(\theta)\leq\xi\leq d(\theta)}\inf_{|x|\leq l-1}[d(\theta)-c(\theta)]\,\check{\theta}^{*k(\theta)}(\xi +\rme^{-\alpha}x) >
0 \;.
\end{equation*}
Observe that for $c\leq\xi\leq d$ and $|x|\leq l-1$ we have $\xi
+\rme^{-\alpha}x\in[c-\rme^{-\alpha}(l-1),d+\rme^{-\alpha}(l-1)]$. So for any
interval $[c',d']$ of length $d'-c'>2\rme^{-\alpha}(l-1)$, we may set $c=c'+\rme^{-\alpha}(l-1)<d=d'-\rme^{-\alpha}(l-1)$
so that  $c\leq\xi\leq d$ and $|x|\leq l-1$ imply $\xi
+\rme^{-\alpha}x\in[c',d']$. 
Hence the proof boils down to showing that for each $\theta\in\Theta(K,L,s,m)$, there exist $c'(\theta)<d'(\theta)$ with
$d'(\theta)-c'(\theta)>2\rme^{-\alpha}(l-1)$ and $k(\theta)\geq1$ such that
\begin{equation}\label{eq:doeblin-to-show}
\inf_{\theta\in\Theta(K,L,s,m)}\inf_{c'(\theta)\leq\xi\leq d'(\theta)}[d'(\theta)-c'(\theta)-2\rme^{-\alpha}(l-1)]\,
\check{\theta}^{*k(\theta)}(\xi)
> 0 \;.
\end{equation}
By Lemma~\ref{lem:theta-check-min}, there exists $a>0$, $\Delta>0$, $\delta>1$ and
$\epsilon_0>0$ and such that  $\epsilon_0$ and $\Delta=b-a$ only depend on
$m$, $K$, $L$ and $s$ (although $a$ may depend on $\theta$), and
$$
\inf_{a<x<(a+\Delta)/\delta}\check\theta(x)>  \epsilon_0 \;.
$$
Finally, Lemma~\ref{Lemma:ConvolutionOfFunctions} and the previous
bound yield~(\ref{eq:doeblin-to-show}), which concludes the proof.
\end{proof} 
\subsection{Proof  of Theorem \ref{thm:mixing:uniform-geometricbound}}
\label{section:app:beta-mixing}
% When the stochastic process \( (X_t)_{t \geq 0} \) corresponds to a Markov Process with time-homogeneous transition function \( Q \) and initial distribution \(\nu_0 \), Davydov \cite{davydov1973mixing} showed that the coefficient can be written as:
% \begin{equation}
% \beta(s) = \sup_{t \geq 0} \int_\R \| Q^s(x,\cdot) -\nu_{s+t}  \|_{TV}\nu_t(dx)\;.
% \end{equation}
As explained in Section~\ref{sec:prel-results-expon}, $(X_i)_{i\geq0}$ is a
stationary $V$-geometrically ergodic Markov chain with Markov kernel denoted by
$Q_\theta$. By \cite{davydov1973mixing},
the $\beta$-coefficient of the stationary Markov chain $(X_i)_{i\geq0}$ can be
expressed for all $n\geq1$ and $\theta\in\Theta(K,L,s,m)$ as
$$
\beta^\theta_n =\int_\R \| Q^n_\theta(x,\cdot) -\pi_\theta  \|_{TV}\;\pi_\theta(dx) \;,
$$
where $\pi_\theta$ is the invariant marginal distribution and $|\cdot|_{TV}$
denotes the total variation norm, i.e. the $V$-norm with $V=1$.
Combining Propositions \ref{Proof:GeomDrift}, \ref{Proof:DoeblinSet} and~\ref{Prop:GeometricErgodicity},  we can
find constants $C>0$ and $\rho \in(0,1)$ only depending on $\lambda$, $\alpha$,
$K,L,s$ and $m$ such that
\begin{equation*}
\|Q_\theta^n(x,\cdot)-\pi_\theta\|_{TV} \leq \|Q_\theta^n(x,\cdot)-\pi_\theta\|_V \leq C\left( 2+\espt{|X_1|}+ |x| \right) \rho^n \;,
\end{equation*}
where $V(x)=1+|x|$. The last two displays yield
\begin{align}
\nonumber\beta_n^\theta% &\leq \int_\R \left[C\left( 1+\espt{|X_1|}+1\vee |x| \right) \rho^n\right] \pi(\rmd x)\\
\nonumber&\leq 2C\;(1+ \espt{|X_1|})\rho^n\\
&\leq 2C\;(1+ \lambda K^{1/(4+m)}) \rho^n \;,
\end{align}
which concludes the proof.

\subsection{Proof of Theorem~\ref{ECFbounds:Coro}}
\label{sec-proof-empriical}

In \cite{doukhan1995invariance}, the authors establish a Donsker invariance principle for the process $\lbrace Z_n(f) , f \in\mathcal{F}\rbrace$ where $Z_n \defi n^{-1/2}\sum_{i=1}^n \left(\delta_{X_i} - P\right)$ is the normalized centered empirical process associated to a stationary sequence of $\beta$-mixing random variables $(X_1,\cdots,X_n)$ with marginal distribution $P$ and  $\mathcal{F}$ is a class of functions satisfying an entropy condition. To be more precise, suppose that the sequence $(X_i)_{i \geq 1}$ is $\beta$-mixing with $\sum_{n \in\N} \beta_n < \infty$. The mixing rate function $\beta$ is defined by $\beta(t)=\beta_{\lfloor t\rfloor}$ if $t\geq 1$ and $\beta(t)=0$ otherwise while its c\`adl\`ag inverse $\beta^{-1}$ is defined by:
\begin{equation*}
\beta^{-1}(u) \defi \inf_{ t \geq 0} \lbrace \beta(t) \leq u \rbrace
\end{equation*}
Further, for any complex-valued function $f$, denote by $Q_f$ the quantile function of the r.v. $|f(X_0)|$ and introduce the norm:
\begin{equation*}
\|f\|_{2,\beta} \defi \left( \int_0^1 \beta^{-1}(u)Q_f(u)^2\rmd u\right)^{1/2} \;.
\end{equation*}
The space $\mathcal{L}_{2,\beta}$ is defined as the class of functions $f$ such that $\|f\|_{2,\beta} < \infty$. In \cite{doukhan1995invariance}, the authors proved that $\left(\mathcal{L}_{2,\beta}, \|\cdot\|_{2,\beta}\right) $ is a normed subspace of $\mathcal{L}_2$. A useful and trivial result from the definition of the norm $\mathcal{L}_{2,\beta}(P)$ provides the following relation:
\begin{equation*}
|f| \leq |g| \Rightarrow \|f\|_{2,\beta} \leq \|g\|_{2,\beta}\;.
\end{equation*}
For any real $r >1$, another useful (less trivial) result in \cite{doukhan1995invariance} states that under the condition \begin{center}
$\sum_{n \geq 0} \beta_n\;n^{r/\left(r-1\right)}< \infty,$
\end{center}
we have $\mathcal{L}_{2r} \subset \mathcal{L}_{2,\beta}$ with the additional inequality
\begin{equation}
\label{ineq:BetaNormVSLpSpaces}
\|f\|_{2,\beta} \leq \|f\|_{2r}\sqrt{1+r\sum_{n \geq 0} \beta_n\;n^{r/\left(r-1\right)}} \;,
\end{equation}
where $\|f\|_{2r}=\esp{|f(X_0)|^{2r}}^{1/2r}$ denote the usual  $L^{2r}$-norm. 

Now, we can state a result directly adapted from \cite{doukhan1995invariance}[Theorem 3]  that will serve our goal to prove Theorem \ref{theorem22}. For the sake of self-consistency, we recall that, given a metric space of real-valued funtions $(E,\|\cdot\|)$ and two functions $l$, $u$ , the bracket $\left[l,u\right]$ represents the set of all functions $f$ such that $l\leq f \leq u$. For any positive $\epsilon$, $\left[l,u\right]$ is called an $\epsilon$-bracket if $\|l-u\|<\epsilon$.
\begin{theorem} \label{ThmDRM}
Suppose that the sequence $(X_i)_{i\geq1}$ is exponentially $\beta$-mixing and
that there exists $C\geq1$ and $\rho\in(0,1)$ such that $\beta_n\leq C\rho^n$
for all $n\geq1$. 
Let $\sigma>0$ and let $\mathcal{F} \subset \mathcal{L}_{2,\beta}$ be a class
of functions such that for every $f$ in $\mathcal{F}$,
$\|f\|_{2,\beta} \leq \sigma$.
Define 
$$
\phi(\sigma)=  \int_0^\sigma \sqrt{1+\log\left(N_{[\;]}\left(u,\mathcal{F},\|\cdot\|_{2,\beta}\right)\right)} \;\rmd u\, ,
$$
where $N_{[\;]}(u,\mathcal{F},\|\cdot\|_{2,\beta})$ denotes the bracketing number, that is, the minimal number of
$u$-brackets with respect to the norm $\|\cdot\|_{2,\beta}$ that has to be used
for covering $\mathcal{F}$. Suppose that the two following assumptions hold.
\begin{enumerate}[label=(DMR\arabic*),align=left]
\item\label{ThmDRM:EnvelopeCond} $\mathcal{F}$ has an envelope function $F$
  such that $\|F\|_{2r}<\infty$ for some $r>1$.
\item\label{ThmDRM:IntegrabilityCond} $\phi(1) <\infty \;.$ 
\end{enumerate}

Then there exist a constant $A>0$ only depending on $C$ and
$\rho$ such that, for all integer $n$, we have
\begin{equation}
\esp{\sup_{f\in\mathcal{F}}\left|Z_n(f)\right|} \leq A\phi(\sigma) \left(1+\frac{\|F\|_{2r}}{\sigma\sqrt{1-r^{-1}}}\right) \;.
\end{equation}

\end{theorem}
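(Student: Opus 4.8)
The plan is to obtain this bound as a specialization of the maximal inequality for absolutely regular empirical processes established in \cite{doukhan1995invariance} (their Theorem~3), so that the real work consists in checking that the present hypotheses fit that framework and in tracking how the constants depend on the mixing sequence. The engine inside \cite{doukhan1995invariance} is a combination of Berbee's coupling — which replaces the $\beta$-mixing sample by an independent one, at a cost controlled by the mixing coefficients — with a chaining argument over the brackets measured in the norm $\|\cdot\|_{2,\beta}$; the entropy integral $\phi(\sigma)$ is exactly the quantity that bounds the resulting chaining sum. I would therefore first recall that this inequality bounds $\esp{\sup_{f\in\mathcal{F}}|Z_n(f)|}$ by $\phi(\sigma)$ times a factor controlling the contribution of the envelope, provided the entropy condition \ref{ThmDRM:IntegrabilityCond} and an envelope-integrability condition hold.

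First I would verify the hypotheses. The exponential bound $\beta_n \le C\rho^n$ implies $\sum_{n\ge0}\beta_n\,n^{r/(r-1)}<\infty$ for every $r>1$, since geometric decay dominates any polynomial growth; moreover this series is bounded by a quantity depending only on $C$, $\rho$ and $r$. Combined with the envelope condition \ref{ThmDRM:EnvelopeCond}, namely $\|F\|_{2r}<\infty$, the embedding \eqref{ineq:BetaNormVSLpSpaces} then guarantees $F\in\mathcal{L}_{2,\beta}$, so that $\mathcal{F}\subset\mathcal{L}_{2,\beta}$ and the bracketing numbers and the integral $\phi$ are well defined. This is the step that reconciles the two natural norms: the statement is phrased with the readily checkable $L^{2r}$ norm on the envelope, whereas the inequality of \cite{doukhan1995invariance} is intrinsically expressed in the $\|\cdot\|_{2,\beta}$ geometry.

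Next I would import the quantitative conclusion and convert its envelope term. The bound of \cite{doukhan1995invariance} produces a factor of the form $1+\|F\|_{2,\beta}/\sigma$ up to the chaining normalisation. Applying \eqref{ineq:BetaNormVSLpSpaces} to the envelope, $\|F\|_{2,\beta}\le \|F\|_{2r}\sqrt{1+r\sum_{n\ge0}\beta_n\,n^{r/(r-1)}}$, and tracking the exponent $r/(r-1)$ produced when the truncation level of the envelope is optimised against its $L^{2r}$ tail, one recovers precisely the displayed factor $1+\|F\|_{2r}/(\sigma\sqrt{1-r^{-1}})$, the remaining $\beta$-dependent constant being absorbed into $A$.

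The main obstacle, and the part requiring genuine care, is the bookkeeping of constants: the inequality in \cite{doukhan1995invariance} carries constants depending a priori on the whole mixing sequence, and the claim here is that under geometric mixing they collapse to a dependence on $(C,\rho)$ alone, with the dependence on $r$ isolated in the explicit factor $\sqrt{1-r^{-1}}$ and the dependence on the class isolated in $\phi(\sigma)$ and $\sigma$. I would therefore bound every occurrence of $\sum_{n\ge0}\beta_n\,n^{r/(r-1)}$ and of the coupling cost $\sum_{n\ge0}\beta_n$ by explicit functions of $C,\rho$ (and $r$), and check that the normalisation conventions — the centering of $Z_n$ and the $n^{-1/2}$ scaling — match those of \cite{doukhan1995invariance}, so that the final constant $A$ can indeed be taken to depend only on $C$ and $\rho$.
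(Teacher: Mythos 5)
Your proposal matches the paper's own treatment of this statement: the paper does not prove the theorem at all, presenting it as a result ``directly adapted from'' \cite{doukhan1995invariance}[Theorem 3], and that adaptation is exactly what you carry out — checking that geometric mixing gives $\sum_{n}\beta_n n^{r/(r-1)}<\infty$ with a bound depending only on $C,\rho,r$, placing the envelope in $\mathcal{L}_{2,\beta}$ via \eqref{ineq:BetaNormVSLpSpaces}, and converting the envelope factor so the $r$-dependence appears through $\sqrt{1-r^{-1}}$ while the mixing dependence is absorbed into $A$. Your description of the underlying mechanism (Berbee coupling plus chaining over $\|\cdot\|_{2,\beta}$-brackets, with $\phi(\sigma)$ controlling the chaining sum) is consistent with the cited source, so there is nothing to correct.
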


Having this result at hand, we now remark that \eqref{ExpectationBounds}, for a fixed integer $k$, can be rewritten as
\begin{equation}
\esp{\sup_{u \in {\left[-h^{-1},h^{-1}\right]}} \left| \hat\varphi_n^{(k)}(u)-\varphi^{(k)}(u)\right|} = n^{-1/2}\esp{\sup_{f \in \mathcal{F}^k_{h} }\left|Z_n(f)\right|} \, ,
\end{equation}
where
\begin{equation}
\mathcal{F}^{k}_{h} \defi \lbrace f_u : x \to (ix)^ke^{iux} , u \in \left[-h^{-1},h^{-1} \right]\rbrace .
\end{equation}
The proof of Theorem \ref{Thm:ECF} based on an application of the previous theorem is as follows.
\begin{proof}
We apply Theorem \ref{ThmDRM} for a fixed integer $k$,  $\mathcal{F}=\mathcal{F}^k_{h}$, $F=F_k$ and $r=(4+m)/4$ where $F_k:x\to |x|^k$. 
\small\subparagraph*{Assumption \ref{ThmDRM:EnvelopeCond}:}  

Let $k$ be a fixed integer. On the one hand, the function $F_k$ is an envelope function of the class $\mathcal{F}^{k}_{h}$ and on the other hand, for any real $r>1$, from \eqref{ineq:BetaNormVSLpSpaces}, we have
\begin{align} \label{BetaNormEnvelope}
\|F_k\|_{2,\beta} &\leq  \esp{|X_1|^{2kr}}^{1/2r}\sqrt{1+r\sum_{n \geq 0} \beta_n\;n^{r/\left(r-1\right)}}\defi \sigma_{r,k} < \infty  \;.
\end{align}

\small\subparagraph*{Assumption \ref{ThmDRM:IntegrabilityCond}:}

For $k$ a fixed integer, the class $\mathcal{F}^{k}_{h}$ is Lipschitz in the index parameter: indeed, we have for every $s,t$ in $ \left[-h^{-1},h^{-1} \right]$ and every real $x$
\begin{equation}
\left|(\rmi x)^k\rme^{\rmi sx}-(\rmi x)^k\rme^{\rmi tx}\right| \leq \left| s-t\right| \left|x\right|^{k+1}
\end{equation}
A direct application of \cite{van1996weak}[Theorem 2.7.11] for the classes $\mathcal{F}^k_{h}$ gives for any $\epsilon >0$:
\begin{align} \label{ProofThm33:1}
\nonumber N_{[\,]}\left(2\epsilon||F_{k+1}||_{2,\beta},\mathcal{F}^k_{h},||\cdot||_{2,\beta}\right) &\leq N\left(\epsilon,\left[-h^{-1},h^{-1} \right],|\cdot|\right) \\
& \leq 1+\frac{2h^{-1}}{\epsilon}
\end{align}
where $N$ and $N_{[\;]}$ are respectively called the covering numbers and bracketing number (these numbers respectively represent the minimum number of balls and brackets of a given size necessary to cover a space with respect to a given norm). 
From \eqref{ProofThm33:1}, it follows that for any $\sigma>0$, we have
\begin{align}
\nonumber\phi(\sigma) &=
%\int_0^\sigma\sqrt{H_\beta\left(u,\mathcal{F}^{k}_{h}\right)}\rmd u= 
\int_0^\sigma \sqrt{1+\log\left(N_{[\;]}\left(u,\mathcal{F}_h^k,\|\cdot\|_{2,\beta}\right)\right)}\rmd u\\
\label{ThmDRM:Conclusion} & \leq \int_0^\sigma\sqrt{1+\log\left(1+\frac{4||F_{k+1}||_{2,\beta}h^{-1}}{u}\right)}\rmd u \\
\nonumber & \leq \int_0^\sigma \left(1+\frac{2||F_{k+1}||_{2,\beta}^{1/2}h^{-1/2}}{u^{1/2}} \right)\rmd u \\
& =\sigma+4\sqrt{\sigma}\,\|F_{k+1}\|_{2,\beta}^{1/2}h^{-1/2}<\infty
\end{align}
because we supposed $F_{k+1} \in \mathcal{L}_{2r}$ and $\beta_n \leq C\rho^n$ which, from \eqref{ineq:BetaNormVSLpSpaces}, implies that $\|F_{k+1}\|_{2,\beta}<\infty$.
\subparagraph*{Conclusion of the proof}
The application of Theorem \ref{ThmDRM} gives
$$
\espt{\sup_{u \in {\left[-h^{-1},h^{-1}\right]}} \left| \hat{\varphi}_n^{(k)}(u)-\varphi^{(k)}(u)\right|} \leq \tilde A \frac{\phi(\sigma_{r,k})}{n^{1/2}} \;
$$
where $\tilde{A}=A(1+1/\sqrt{1-r^{-1}})$ since, from \eqref{BetaNormEnvelope}, we have $\|F_k\|_{2r}/\sigma_{r,k} \leq 1$.\newline
Set $c_{r,\bar\beta}\defi \sqrt{1+r\sum_{n \geq 0} \beta_n\;n^{r/\left(r-1\right)}} $. From  \eqref{ThmDRM:Conclusion} and \eqref{BetaNormEnvelope}, we can write
$$
\phi(\sigma) \leq \int_0^{\sigma} \sqrt{1+\log\left(1+\frac{4||F_{k+1}||_{2r}c_{r,\bar\beta}h^{-1}}{u}\right)}\rmd u \;,
$$
For $\sigma=\sigma_{r,k}$, we get after the change of variable $v= \frac{4||F_{k+1}||_{2,r}c_{r,\bar\beta}\sigma_{r,k}h^{-1}}{u}$
\begin{align*}
\phi(\sigma_{r,k})\leq  \max\left(\|F_k\|_{2r},\|F_{k+1}\|_{2r}\right)c_{r,\bar \beta}\left(1+h^{-1}\int_{h^{-1}
}^\infty \sqrt{\log(1+v)}\frac{\rmd v}{v^2}\right) \;.
\end{align*}

By Lemma \ref{Lemma:Karamata}, we get for a universal constant $B>0$ that
$$
\phi(\sigma_{r,k})\leq B\max\left(\|F_k\|_{2r} \|F_{k+1}\|_{2r}\right)c_{r,\bar \beta}\left(1+\sqrt{\log\left(1+h^{-1}\right)}\right)\;.
$$

% &\leq 4||F_{k+1}||_{2,r} M_{k,r}^{-1} c_{r,\bar \beta}h^{-1}\int_{h^{-1}}^\infty \sqrt{\log(1+M_{k,r}v)}\frac{\rmd v}{v^2}\\
%& \leq 4||F_{k+1}||_{2,r} \frac{\max\left(M_{k,r},1\right)}{M_{k,r}} c_{r,\bar \beta}h^{-1}\int_{h^{-1}}^\infty \sqrt{\log(1+v)}\frac{\rmd v}{v^2}\\
%& 
%where $M_{k,r}=\frac{4\|F_{k+1}\|_{2r}}{\|F_k\|_{2r}}$. 

%$$
%\int_{H_{n,k}}^\infty \sqrt{\log(1+u)}\frac{\rmd u}{u^2} \underset{n\to \infty}{\sim} %\frac{\sqrt{\log(1+H_{n,k})}}{H_{n,k}} \underset{n\to \infty}{\sim} \frac{\sqrt{\log(h_n^{-1})}}{H_{n,k}}\;.
%$$
%Then, we have
%$$
%4\|F_{k+1}\|_{2r}\;c_{r,\bar \beta}\;h_n^{-1}\int_{H_{n,k}}^\infty \sqrt{\log(1+u)}\frac{\rmd u}{u^2} \underset{n%\to \infty}{\sim} \|F_k\|_{2r}c_{r,\bar \beta} \sqrt{\log(h_n^{-1})}\;.
%$$

In the particular context of Corollary \ref{ECFbounds:Coro}, we use the fact that $\|F_k\|_{2r}$ can be bounded by $\max(1,K^{4k/(4+m)})$ and $c_{r,\bar \beta}$ by a constant only depending on the parameters $K,L,s,m$
\end{proof}

\subsection{Proof of Theorem \ref{theorem22}}
\label{sec:gamma-choosen}
%\begin{proof}[Proof of  Lemma~\ref{lem:unif-conv-delta}]
%  The bound of the mean squared error of the empirical variance is standard. So
%  we only provide the main arguments which yield the uniform bound over all
%  $\theta$ such that $\espt{|Y_0|^2}\leq K$.
%
%The cumulants of the shot-noise process can be easily derived (see for instance~\cite{daley1988introduction}[Section 6.2.]). For an integer $m\geq 1$ and a
%sequence of positive numbers $s_1\leq\dots\leq s_m$, we have
%\begin{align*}
%\text{Cum}(X_{s_1},\dots,X_{s_m})&=\lambda\espt{Y_0^m}\int_{-\infty}^{s_1}\rme^{-\alpha\left(s_1+\dots+s_m
%    - mt\right) }\rmd t \\
%&=\frac{\lambda}{\alpha m}\espt{Y_0^m}\rme^{-\alpha\sum_{k=2}^{m}(s_k-s_1)} \;.
%\end{align*}
%Since we supposed that $\espt{|Y_0|^2}\leq K$, there exists some $C>0$ only depending on $K$ such that, for any integer $m\leq2$,
%$$
%\text{Cum}(X_{s_1},\dots,X_{s_m})
%\leq \frac{\lambda C}{\alpha}\rme^{-\alpha\sum_{k=2}^{m}(s_k-s_1)} \;.
%$$
%Expressing the absolute error $\espt{\left|\hat m _n -m_\theta\right|}$ with such cumulants of order at most
%two then yields the conclusion of the lemma.  
%\end{proof}

\begin{proof}

We denote by $\theta_n^0$ the function defined by :\begin{equation}
\label{eq:biais}
\theta_n^0(x) \defi \max\left(0,\frac{1}{2\pi}\int_{- h_n^{-1}}^{ h_n^{-1}}\rme^{-\rmi xu}\varphi_{Y_0}(u) \rmd u\right)\;.
\end{equation} 
Since $\theta \in \Theta(K,L,s,m)$ and $s>1/2$, we have that $\mathcal{F}\left[\theta\right]$ is integrable and, under $\theta \geq 0$, we have
\begin{equation}
\label{eq:inversion_formula}
\theta(x) = \max\left(0,\frac{1}{2\pi}\int_\R \rme^{-\rmi xu}\varphi_{Y_0}(u) \rmd u\right) \;.
\end{equation}
We decompose the error in infinite norm as
\begin{equation}
\label{Proof:DecompositionError}
\|\theta-\hat{\theta}_n\|_\infty  \leq \|\theta-\theta_n^0\|_\infty  + \|\theta_n^0-\hat{\theta}_n\|_\infty\;.
\end{equation}
From \eqref{eq:biais} and \eqref{eq:inversion_formula}, we get
\begin{align}
\nonumber\|\theta-\theta_n^0\|_\infty &\leq \frac{1}{\pi}\int_{ h_n^{-1}}^\infty \left|\varphi_{Y_0}(u)\right|\rmd u \\
&\nonumber =\frac{1}{\pi}\int_{ h_n^{-1}}^\infty \left|u^{-s}u^{s}\varphi_{Y_0}(u)\right|\rmd u \\
&\nonumber\leq \frac{1}{\pi} \left(\int_{ h_n^{-1}}^{\infty} \left|u\right|^{-2s}\rmd u\right)^{1/2}\left(\int_{ h_n^{-1}}^{\infty} \left|u^{s}\varphi_{Y_0}(u)\right|^2\rmd u\right)^{1/2}\\
&\label{proof:term_deterministic}\leq \frac{L}{\pi}\frac{ h_n^{s-1/2} }{2s-1} =\frac{L}{(2s-1)\pi} n^{-(2s-1)/(4s+2+4\lambda/\alpha)} \;.
\end{align}   
where we used the Cauchy-Schwartz inequality and the assumption that~$\theta \in\Theta(K,L,s,m)$.\newline
We conclude with a bound of the term involving $\|\theta_n^0-\hat{\theta}_n\|_\infty$  in~(\ref{Proof:DecompositionError}). To this end, the following inequality will be useful. Using \eqref{eq:characYcharacX} and the mean-value theorem, we have
\begin{equation}
\label{eq:maj_psi}
\sup_{u\in\R}\left|\frac{\varphi_{X_0}^\prime(u)}{\varphi_{X_0}(u)}\right| \leq \frac{\lambda}{\alpha}\sup_{u\in\R}\left|\varphi_{Y_0}^\prime(u)\right| \leq \frac{\lambda}{\alpha}\espt{\left|Y_0\right|}\leq \frac{\lambda}{\alpha}K^{1/(4+m)}\;.
\end{equation}

By \eqref{eq:characYcharacX}, we can bound the term $\|\theta_n^0-\hat{\theta}_n\|_\infty$ by
\begin{align*}
& \|\theta_n^0-\hat{\theta}_n\|_\infty =\frac{\alpha}{\lambda\pi}\left|\int_{-h_n^{-1}}^{h_n^{-1}}  \frac{\varphi_{X_0}'(u)}{\varphi_{X_0}(u)}-\frac{\hat{\varphi}'_n(u)}{\hat{\varphi}_n(u)}\indii{\left|\hat{\varphi}_n(u)\right| \geq \kappa_n} \rmd u\right|\\
&\leq \frac{\alpha}{\lambda\pi}\int_{-h_n^{-1}}^{h_n^{-1}} \left| \frac{\varphi_{X_0}'(u)}{\varphi_{X_0}(u)}-\frac{\hat{\varphi}'_n(u)}{\hat{\varphi}_n(u)}\indii{\left|\hat{\varphi}_n(u)\right| \geq \kappa_n}\right| \rmd u\\
&\leq \frac{2\alpha h_n^{-1}}{\lambda\pi}   \sup_{ \left|u\right| \leq h_n^{-1} } \left| \frac{\varphi_{X_0}'(u)}{\varphi_{X_0}(u)}-\frac{\hat{\varphi}'_n(u)}{\hat{\varphi}_n(u)}\indii{\left|\hat{\varphi}_n(u)\right| \geq \kappa_n}\right| \\
&\leq \frac{2\alpha h_n^{-1}}{\lambda\pi}\left(\sup_{ \left|u\right| \leq h_n^{-1} } \left| \frac{\varphi_{X_0}'(u)}{\varphi_{X_0}(u)}-\frac{\hat{\varphi}'_n(u)}{\hat{\varphi}_n(u)}\right|\indii{\left|\hat{\varphi}_n(u)\right| > \kappa_n} +\sup_{ \left|u\right| \leq h_n^{-1} }\left| \frac{\varphi_{X_0}'(u)}{\varphi_{X_0}(u)}\right|\indii{\left|\hat{\varphi}_n(u)\right| \leq \kappa_n} \right)\\
&  \defi A_{n,1} + A_{n,2}\;.
\end{align*}

Writing $\frac{\varphi_{X_0}^\prime}{\varphi_{X_0}}-\frac{\hat \varphi_{n}^\prime}{\hat \varphi_{n}}$ as $\left(\frac{\varphi_{X_0}^\prime}{\varphi_{X_0}}-\frac{\varphi_{X_0}^\prime}{\hat \varphi_{n}}\right)+\left( \frac{\varphi_{X_0}^\prime}{\hat \varphi_{n}}-\frac{\hat \varphi_{n}^\prime}{\hat \varphi_{n}} \right) $, the term $A_{n,1}$ can be bounded as follows.
\begin{align*}
&\sup_{ \left|u\right| \leq h_n^{-1}} \left| \frac{\varphi_{X_0}'(u)}{\varphi_{X_0}(u)}-\frac{\hat{\varphi}'_n(u)}{\hat{\varphi}_n(u)}\right|1_{\left|\hat{\varphi}_n(u)\right| > \kappa_n} \\&\leq \kappa_n^{-1}\sup_{ \left|u\right| \leq h_n^{-1}}\left|\Psi(u)\right|\left|\hat{\varphi}_n(u)-\varphi_{X_0}(u)\right| + \kappa_n^{-1}\sup_{ \left|u\right| \leq h_n^{-1}}\left|\hat{\varphi}'_n(u)-\varphi_{X_0}'(u)\right|
\end{align*}
Thus, using \eqref{eq:maj_psi}, we get
\begin{align*}
\espt{A_{n,1} }&\leq \frac{2  h_{n}^{-1} \kappa_n^{-1} K^{1/(4+m)} \espt{\sup_{ \left|u\right| \leq h_{n}^{-1}}\left|\hat{\varphi}'_n(u)-\varphi_{X_0}'(u)\right|}}{\pi} \\
&+ \frac{2\alpha h_{n}^{-1} \kappa_n^{-1} \espt{\sup_{ \left|u\right| \leq h_{n}^{-1}}\left|\hat{\varphi}_n(u)-\varphi_{X_0}(u)\right|}}{\pi\lambda} \;.
\end{align*}
The two terms on the right hand side can be bounded using Corollary \ref{ECFbounds:Coro} with $r=(4+m)/4$. It gives
\begin{equation*}
\espt{\sup_{\left|u\right| \leq h_{n}^{-1}}\left|\hat{\varphi}'_n(u)-\varphi_{X_0}'(u)\right|} \leq \frac{B\;K^{4/(4+m)}\left(1+\sqrt{\log(1+h_{n}^{-1})}\right)}{n^{1/2}}
\end{equation*}
and
\begin{equation*}
\espt{\sup_{\left|u\right| \leq h_{n}^{-1}}\left|\hat{\varphi}_n(u)-\varphi_{X_0}(u)\right|} \leq \frac{B\left(1+\sqrt{\log(1+h_{n}^{-1})}\right)}{n^{1/2}} \;.
\end{equation*}
In the following, for two positive quantities $P$ and $Q$, possibly depending
on $\theta$ and $n$ we use the notation   
\begin{equation}
  \label{eq:notation-lessssim}
  P \lesssim Q\Longleftrightarrow\text{for all $n\geq3$,}  \sup_{\theta \in \Theta(K,L,s,m)} \frac PQ <\infty \;.
\end{equation} 
($P$ is less than $Q$ up to a multiplicative constant uniform over $\theta \in \Theta(K,L,s,m)$). We thus have that
\begin{align}
\nonumber%\sup_{\theta \in \Theta(K,L,s,m)}
 \espt{A_{n,1} }&\lesssim  \frac{1+\sqrt{\log(1+h_{n}^{-1})}}{\,\kappa_n\, n^{1/2}h_{n}}\\
 &\nonumber \lesssim \frac{1+\sqrt{\log(1+h_{n}^{-1})}}{\,\, n^{1/2}h_{n}^{\lambda/\alpha+1}}\\
 &\label{Proof:term_A1}
  \lesssim  n^{-(2s-1)/(4s+2+4\lambda/\alpha)}\log\left(n\right)^{1/2}\;,
\end{align}
where we used the fact that $\kappa_n^{-1}\leq 2C^{-1}h_n^{\lambda/\alpha}$ for any integer $n$.\newline
We now bound $A_{n,2}$. From \eqref{eq:maj_psi}, remark that
\begin{align}
\nonumber\espt{A_{n,2}}&\leq \frac{2 }{\pi h_n}K^{1/(4+m)}\mathbb{P}_\theta \left(\exists u \in\left[-h_n^{-1},h_n^{-1}\right] ,\left|\hat \varphi_n(u)\right|\leq \kappa_n \right)\\
&\label{proof_A2}\leq \frac{2 }{\pi h_n}K^{1/(4+m)}\mathbb{P}_\theta \left(\inf_{\left|u\right|  \leq h_{n}^{-1}} \left|\hat \varphi_n(u)\right| \leq   \kappa_n  \right)\;,
\end{align}
From Lemma \eqref{lemma:LemmaMainResult}, we have 
\begin{align}
\nonumber\inf_{\left|u\right|  \leq h_{n}^{-1}} \left|\hat \varphi_n(u)\right|&\geq \inf_{\left|u\right|  \leq h_{n}^{-1}} \left| \varphi_{X_0}(u)\right| -\sup_{\left|u\right|  \leq h_{n}^{-1}} \left|\hat \varphi_n(u)-\varphi_{X_0}(u)\right|\\
&\nonumber\geq C_{K,L,m,\lambda/\alpha}\left(1+h_n^{-1}\right)^{-\lambda/\alpha} -\sup_{\left|u\right|  \leq h_{n}^{-1}} \left|\hat \varphi_n(u)-\varphi_{X_0}(u)\right|\;.
\end{align}
It follows that
\begin{align*}
%\label{Proof:A4Majoration}
&\mathbb{P}_\theta \left(\inf_{\left|u\right|  \leq h_{n}^{-1}} \left|\hat \varphi_n(u)\right| \leq   \kappa_n  \right)\\
&\leq \mathbb{P}_\theta\left(\sup_{\left|u\right|  \leq h_{n}^{-1}} \left|\hat \varphi_n(u)-\varphi_{X_0}(u)\right| \geq (C_{K,L,m,\lambda/\alpha}-C)\left(1+h_n^{-1}\right)^{-\lambda/\alpha} \right)\;.
\end{align*}
 Since $0<C<C_{K,L,m,\lambda/\alpha}$, applying Corollary~\ref{ECFbounds:Coro} combined to the Markov's inequality, and using \eqref{proof_A2}, we get
\begin{align}\nonumber
\espt{A_{n,2}}& \lesssim 
\frac{1+\sqrt{\log\left(1+h_n^{-1}\right)}}{n^{1/2}\kappa_n h_n}\\
\label{Proof:Term2}
&\lesssim  n^{-(2s-1)/(4s+2+4\lambda/\alpha)}\log\left(n\right)^{1/2}\;,
\end{align}
Equations~(\ref{proof:term_deterministic}), \eqref{Proof:term_A1} and
\eqref{Proof:Term2} imply~(\ref{eq:main_th}) and the proof is concluded.
\end{proof}

\appendix
\section{Useful lemmas}
The following classical embedding will be useful.
\begin{lemma}[Sobolev embedding] \label{lemma:SobolCond} 
Let $K,L,m>0$ and $s
  >1/2$.  Let $\theta\in\Theta(K,L,s,m)$
  defined in \eqref{eq:smoothness-class}. Then, for any $\gamma
  \in (0,(s-1/2)\wedge 1)$, there is a constant $C>0$ depending on $L$, $s$ and
  $\gamma$ such that, for every real numbers $x,y$,
\begin{equation}
\left| \theta(x)-\theta(y) \right| \leq C \left| x-y\right|^{\gamma} \;,
\end{equation}
where 
\begin{equation}
  \label{eq:Cdef-embedding}
C =\frac{3}{2\pi}L\;\left(\int_\R\frac{|\xi|^{2\gamma}}{(1+|\xi|^2)^s}\rmd \xi\right)^{1/2}\;.  
\end{equation}
\end{lemma}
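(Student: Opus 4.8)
The plan is to bound the Hölder difference $|\theta(x)-\theta(y)|$ directly through the Fourier inversion formula together with an interpolation of the elementary inequality $|\rme^{\rmi t}-1|\leq |t|\wedge 2$. First I would check that the inversion formula is available pointwise: since $s>1/2$, Cauchy--Schwarz gives
$$
\int_\R |\mathcal{F}\theta(u)|\,\rmd u \leq \left(\int_\R (1+|u|^2)^{-s}\,\rmd u\right)^{1/2}\left(\int_\R (1+|u|^2)^{s}|\mathcal{F}\theta(u)|^2\,\rmd u\right)^{1/2}\leq L\left(\int_\R (1+|u|^2)^{-s}\,\rmd u\right)^{1/2}<\infty,
$$
so $\mathcal{F}\theta\in L^1(\R)$ and $\theta$ admits a continuous representative satisfying $\theta(x)=\tfrac{1}{2\pi}\int_\R \rme^{\rmi xu}\mathcal{F}\theta(u)\,\rmd u$ for every $x\in\R$.

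Next, for fixed $x,y$ I would write $\theta(x)-\theta(y)=\tfrac{1}{2\pi}\int_\R (\rme^{\rmi xu}-\rme^{\rmi yu})\mathcal{F}\theta(u)\,\rmd u$ and pass the absolute value inside the integral. The key elementary estimate is the interpolation $|\rme^{\rmi t}-1|=|\rme^{\rmi t}-1|^{1-\gamma}|\rme^{\rmi t}-1|^{\gamma}\leq 2^{1-\gamma}|t|^{\gamma}$, valid for $\gamma\in[0,1]$; applied with $t=(x-y)u$ it gives $|\rme^{\rmi xu}-\rme^{\rmi yu}|\leq 2^{1-\gamma}|x-y|^\gamma |u|^\gamma$, which factors the desired $|x-y|^\gamma$ out of the integral.

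It then remains to control $\int_\R |u|^\gamma |\mathcal{F}\theta(u)|\,\rmd u$. Inserting the weight $(1+|u|^2)^{s/2}$ and applying Cauchy--Schwarz bounds this by
$$
\left(\int_\R \frac{|u|^{2\gamma}}{(1+|u|^2)^s}\,\rmd u\right)^{1/2}\left(\int_\R (1+|u|^2)^s|\mathcal{F}\theta(u)|^2\,\rmd u\right)^{1/2}\leq L\left(\int_\R \frac{|u|^{2\gamma}}{(1+|u|^2)^s}\,\rmd u\right)^{1/2},
$$
where the Sobolev bound defining $\Theta(K,L,s,m)$ was used for the second factor. Collecting the constants and bounding the prefactor $2^{1-\gamma}\leq 2\leq 3$ yields precisely $C=\tfrac{3}{2\pi}L\big(\int_\R |u|^{2\gamma}(1+|u|^2)^{-s}\,\rmd u\big)^{1/2}$, as in \eqref{eq:Cdef-embedding} (the sharper $2^{1-\gamma}$ would in fact allow a smaller constant, but the stated $3/(2\pi)$ suffices).

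The only point requiring care—and the place where the admissible range of $\gamma$ enters—is the finiteness of $\int_\R |u|^{2\gamma}(1+|u|^2)^{-s}\,\rmd u$: the integrand behaves like $|u|^{2\gamma-2s}$ at infinity, so the integral converges exactly when $2\gamma-2s<-1$, i.e.\ $\gamma<s-1/2$, while $\gamma\leq 1$ is needed for the interpolation inequality. Together these give the stated condition $\gamma\in(0,(s-1/2)\wedge 1)$. I do not anticipate any genuine obstacle beyond this bookkeeping, since the argument is a standard Sobolev embedding once the pointwise inversion formula is justified.
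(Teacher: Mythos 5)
Your proof is correct. Note that the paper itself offers no proof of this lemma---it is stated as a ``classical embedding'' and used as a black box in Lemma~\ref{ProofMixing:LemmaMinorizationTheta} and Proposition~\ref{Proof:DoeblinSet}---so there is no authorial argument to compare against; yours is the standard one that the stated constant clearly presupposes. All the key points are handled properly: the $L^1$ bound on $\mathcal{F}\theta$ via Cauchy--Schwarz (which justifies passing to the continuous representative of $\theta$, the version for which the pointwise statement makes sense), the interpolation $|\rme^{\rmi t}-1|\leq 2^{1-\gamma}|t|^\gamma$, the weighted Cauchy--Schwarz step producing $L\left(\int_\R |u|^{2\gamma}(1+|u|^2)^{-s}\rmd u\right)^{1/2}$, and the verification that this integral is finite precisely on the stated range $\gamma\in(0,(s-1/2)\wedge 1)$. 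Your constant $2^{1-\gamma}/(2\pi)\leq 1/\pi$ is in fact sharper than the $3/(2\pi)$ in~\eqref{eq:Cdef-embedding}, so the lemma as stated follows a fortiori.
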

The following result is used in the proof of Proposition~\ref{Proof:DoeblinSet}.
\begin{lemma}
\label{ProofMixing:LemmaMinorizationTheta}
Let $K,L,m>0$ and $s >1/2$.  Let  $\gamma
  \in (0,(s-1/2)\wedge 1)$ and $\theta\in\Theta(K,L,s,m)$. 
Then, there exists $0<a\leq T_K$ such that
$$
\inf_{a\leq x\leq a+\Delta}\theta(x)\geq \frac1 {16} (2K)^{-1/(4+m)}\;,
$$
where $T_K=(2K)^{-1/(\gamma(4+m))}$, $\Delta=(2K)^{-1/(\gamma(4+m))}(16C)^{-1/\gamma}$
with $C$ defined by~(\ref{eq:Cdef-embedding}). 
\end{lemma}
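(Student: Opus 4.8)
The plan is to combine the moment control with the Hölder continuity supplied by Lemma~\ref{lemma:SobolCond}. The Hölder estimate will serve to spread a \emph{pointwise} lower bound on $\theta$ into a lower bound on a whole interval, while the moment bound $\int|y|^{4+m}\theta(y)\,\rmd y\leq K$ will serve, through Markov's inequality, to localise where such a large value of $\theta$ must occur. The interplay of these two ingredients is what produces the explicit triple $(T_K,\Delta,\tfrac1{16}(2K)^{-1/(4+m)})$.

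First I would confine the mass. Since $\int|y|^{4+m}\theta(y)\,\rmd y\le K$, Markov's inequality shows that the mass carried by $\theta$ outside $[-T_K,T_K]$ is at most $K\,T_K^{-(4+m)}$, and the radius $T_K$ is calibrated so that this tail mass does not exceed $1/2$; in the present model the marks are nonnegative energies, so $\theta$ is supported on $\R_+$ and at least half of the total mass $\int\theta=1$ then sits in $[0,T_K]$. This is precisely the step where positivity of the marks is essential: without mass on the positive half-line the conclusion, which asks for an interval $[a,a+\Delta]\subset(0,\infty)$ on which $\theta$ is bounded below, cannot hold. Dividing this mass by the length of the interval, the mean value of $\theta$ over $[0,T_K]$ is bounded below, and since $\theta$ is continuous (by Lemma~\ref{lemma:SobolCond}) a mean-value argument yields a point $x_0\in(0,T_K)$ with $\theta(x_0)\geq\frac18(2K)^{-1/(4+m)}$; the constant $T_K$ is chosen large enough (tail $\le1/2$) yet small enough that $\frac{1}{2T_K}\geq\frac18(2K)^{-1/(4+m)}$.

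Next I would propagate this pointwise lower bound to a neighbourhood using Lemma~\ref{lemma:SobolCond}: for every $x$ with $|x-x_0|\le\Delta$ one has $\theta(x)\ge\theta(x_0)-C|x-x_0|^\gamma\ge\theta(x_0)-C\Delta^\gamma$. The definition $\Delta=(2K)^{-1/(\gamma(4+m))}(16C)^{-1/\gamma}$ is chosen exactly so that $C\Delta^\gamma=\frac1{16}(2K)^{-1/(4+m)}$, whence on $[x_0,x_0+\Delta]$ we get $\theta(x)\ge\frac18(2K)^{-1/(4+m)}-\frac1{16}(2K)^{-1/(4+m)}=\frac1{16}(2K)^{-1/(4+m)}$. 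Taking $a=x_0$ then gives $0<a\le T_K$ together with $\inf_{a\le x\le a+\Delta}\theta(x)\ge\frac1{16}(2K)^{-1/(4+m)}$, which is the claim.

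The main obstacle is uniformity and the bookkeeping of the explicit constants. The lower bound, $\Delta$ and $T_K$ must depend only on $K,L,m,s$ and not on the particular $\theta\in\Theta(K,L,s,m)$, which forces me to use the Hölder constant $C$ of~\eqref{eq:Cdef-embedding} — itself uniform over the class — and to verify that the loss $C\Delta^\gamma$ exactly balances the factor lost in passing from the mean value $\frac{1}{2T_K}$ to a pointwise value, leaving the stated $\frac1{16}(2K)^{-1/(4+m)}$. A secondary technical point is to avoid the boundary case $x_0=0$, handled by extracting $x_0$ in the \emph{open} interval $(0,T_K)$, so that the whole interval $[a,a+\Delta]$ remains in $(0,\infty)$.
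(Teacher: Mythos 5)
Your proof is correct and follows essentially the same route as the paper's: Markov's inequality confines at least half the mass to a bounded interval, continuity produces a point where $\theta \geq \tfrac18 (2K)^{-1/(4+m)}$, and the H\"older bound of Lemma~\ref{lemma:SobolCond} spreads this over an interval of length $\Delta$, with the same calibration $C\Delta^\gamma = \tfrac1{16}(2K)^{-1/(4+m)}$ (note that both you and the paper's own proof effectively take $T_K=(2K)^{1/(4+m)}$, rather than the value printed in the statement). The only divergence is in presentation: where you explicitly invoke nonnegativity of the marks to place the point in $(0,T_K)$, the paper writes ``without loss of generality'' --- the same implicit restriction, which you make explicit.
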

\begin{proof}
We first show that, for every $T>0$,
we have
$$
\sup_{|x|\leq T} \theta(x) \geq (2T)^{-1}\left( 1-T^{-(4+m)}K\right) \;,
$$
Denote by $Y$ a random variable with p.d.f $\theta$ belonging to the class $\Theta(K,L,s,m)$. On the one side, we have $$
 \mathbb{P}\left( |Y| \leq T \right) \leq 2T \; \sup_{|x|\leq T} \theta(x)
 $$ and on the other side
$$
\mathbb{P}\left( |Y| \leq T \right) = 1 -\mathbb{P}\left( |Y| > T \right)\geq 1-\esp{|Y|^{4+m}}T^{-(4+m)} \geq \left( 1-T^{-(4+m)}K\right) \;,
$$ 
where the first inequality is obtained via an application of the Markov
inequality.
Setting $T_K = (2K)^{1/(4+m)}$, we thus have
$$
\sup_{|x|\leq T_K} \theta(x) \geq (4T_K)^{-1} \;.
$$

Moreover, since $\theta$ is continuous, we can without loss of generality suppose that there exists a positive number $a$  in the interval $(0,T_K]$ such that
$$
\theta(a) \geq(8T_K)^{-1} \;.
$$
From Lemma \ref{lemma:SobolCond}, there exists a positive number $\Delta=(16T_KC)^{-1/\gamma}$, independent of the choice of $\theta$ such that
$$
 \inf_{x\in[a,a+\Delta]} \theta(x) \geq (16T_K)^{-1} .
$$
%We should keep in mind that $a$ and $b$ both depend on the choice of the density $\theta$ even if we do not mention it for greater readability. The following lemma provides a way to find an interval in which the function $\check{\theta}$ is bounded below using the previous minorization of $\theta$.

\end{proof}

\begin{lemma}\label{lem:theta-check-min}
Let $K,L,m, \alpha>0$ and $s >1/2$.  Let  $\gamma
  \in (0,(s-1/2)\wedge 1)$ and $\theta\in\Theta(K,L,s,m)$. Define $T_K=(2K)^{-1/(4+m)}$, $\Delta=(2K)^{-1/(\gamma(4+m))}(16C)^{-1/\gamma}$
with $C$ defined by~(\ref{eq:Cdef-embedding}) and let $\delta$ be a positive number satisfying
  $$
  1< \delta< \min\left(\rme^\alpha , \frac{T_K+\Delta}{T_K} \right)\;.
  $$
  For any strictly positive $v$, define the function $\check \theta $ by
  $
  \check \theta (v)=\frac{1}{\alpha v}\int_v^{v\rme^\alpha} \theta(x)\rmd x\;.
  $
Then, there exists $0<a\leq T_K$ such that
$$
\inf_{a\leq v\leq (a+\Delta)/\delta}\check \theta(v)\geq \frac{(2K)^{-1/(4+m)}(\delta-1)} {16 \alpha} \;.
$$

\end{lemma}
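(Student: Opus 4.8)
The plan is to leverage the minorization of $\theta$ already obtained in Lemma~\ref{ProofMixing:LemmaMinorizationTheta}, which furnishes a point $0<a\leq T_K$ and a fixed-length interval $[a,a+\Delta]$ on which $\theta$ is bounded below by $\frac{1}{16}(2K)^{-1/(4+m)}$. The quantity $\check\theta(v)=\frac{1}{\alpha v}\int_v^{v\rme^\alpha}\theta(x)\,\rmd x$ is an average of $\theta$ over the multiplicative window $[v,v\rme^\alpha]$, so the idea is simply to discard most of that window and keep only a subinterval $[v,v\delta]$ that is guaranteed to lie inside the ``good'' interval $[a,a+\Delta]$. On that subinterval $\theta$ is controlled from below, and its length is proportional to $v$, so the factor $1/(\alpha v)$ in $\check\theta$ will cancel the $v$ and produce a bound independent of $v$.

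Concretely, I would first apply Lemma~\ref{ProofMixing:LemmaMinorizationTheta} (with the same $\gamma\in(0,(s-1/2)\wedge1)$ and $C$ from \eqref{eq:Cdef-embedding}) to fix such an $a$. The decisive step is then a pair of elementary inclusions valid for every $v$ with $a\leq v\leq (a+\Delta)/\delta$. Since $v\geq a$ and $v\delta\leq\bigl((a+\Delta)/\delta\bigr)\delta=a+\Delta$, one gets $[v,v\delta]\subseteq[a,a+\Delta]$; and since $\delta<\rme^\alpha$ gives $v\delta<v\rme^\alpha$, one also gets $[v,v\delta]\subseteq[v,v\rme^\alpha]$. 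These two inclusions are precisely what the two constraints imposed on $\delta$ (namely $\delta<\rme^\alpha$ and $\delta<(T_K+\Delta)/T_K$) are designed to guarantee.

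With the inclusions in hand the estimate is immediate: restricting the integration range and using the lower bound on $\theta$ from Lemma~\ref{ProofMixing:LemmaMinorizationTheta},
\begin{equation*}
\int_v^{v\rme^\alpha}\theta(x)\,\rmd x\geq\int_v^{v\delta}\theta(x)\,\rmd x\geq(v\delta-v)\inf_{[a,a+\Delta]}\theta\geq v(\delta-1)\,\tfrac{1}{16}(2K)^{-1/(4+m)}\;,
\end{equation*}
so that dividing by $\alpha v$ yields $\check\theta(v)\geq\frac{(2K)^{-1/(4+m)}(\delta-1)}{16\alpha}$ uniformly over the admissible $v$, which is the claimed bound.

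The one point requiring genuine care, and which I expect to be the main obstacle, is checking that the range $[a,(a+\Delta)/\delta]$ over which the infimum is taken is actually nonempty (indeed of positive length), since otherwise the statement is vacuous. This is where the condition $\delta<(T_K+\Delta)/T_K$ enters: it is equivalent to $T_K(\delta-1)<\Delta$, and combined with $a\leq T_K$ from Lemma~\ref{ProofMixing:LemmaMinorizationTheta} it gives $a(\delta-1)\leq T_K(\delta-1)<\Delta$, hence $a\delta<a+\Delta$, i.e. $a<(a+\Delta)/\delta$. Thus the same hypothesis on $\delta$ that makes the geometric inclusion work also rules out an empty infimum, and the proof closes.
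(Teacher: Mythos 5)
Your proof is correct and follows essentially the same route as the paper's: invoke Lemma~\ref{ProofMixing:LemmaMinorizationTheta} to get the interval $[a,a+\Delta]$ with $a\leq T_K$, restrict the integral defining $\check\theta(v)$ to $[v,v\delta]$ (using $\delta<\rme^\alpha$), and bound $\theta$ below there, with the condition $\delta<(T_K+\Delta)/T_K$ guaranteeing $\delta<(a+\Delta)/a$ and hence a nonempty range for $v$. The paper phrases that last point via monotonicity of $a\mapsto(a+\Delta)/a$ while you do it by the equivalent algebraic manipulation $a(\delta-1)\leq T_K(\delta-1)<\Delta$, but the arguments are identical in substance.
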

\begin{proof}
From Lemma \ref{ProofMixing:LemmaMinorizationTheta}, we have
$$
\inf_{a\leq x\leq a+\Delta}\theta(x)\geq \frac1 {16} (2K)^{-1/(4+m)}\;,
$$
for some $a \in (0,T_K]$.
Let $ \delta \in (1, \rme^\alpha \wedge \frac{T_K+\Delta}{T_K}) \;.$
Since $(a+\Delta)/a$ is a decreasing function in $a$ for a fixed $\Delta$ and $0\leq a \leq T_K$, we have that
$$
(a+\Delta)/a \geq \frac{T_K+\Delta}{T_K}\;
$$
so that $ \delta < (a+\Delta)/a \;.$
For any $v \in [a,(a+\Delta)/\delta]$, we have
\begin{align*}
\check \theta(v) &=\frac{1}{\alpha v}\int_v^{v\rme^\alpha} \theta(x) \rmd x \geq \frac{1}{\alpha v}\int_v^{v\delta} \theta(x) \rmd x\\
& \geq \frac{v\delta-v}{\alpha v}\inf_{x\in[v,v\delta]} \theta(x)\\
&\geq \frac{\epsilon_K (\delta-1)}{\alpha}\;.
\end{align*}
which concludes the proof.
\end{proof}
The following elementary lemma generalizes the previous result for convolutions of lower bounded functions.
\begin{lemma}
\label{Lemma:ConvolutionOfFunctions}
Let $\theta$, $\tilde \theta$ two positive functions such that there exist positive numbers $a,b,c,d,\epsilon$ and $\tilde\epsilon  $ satisfying
\begin{equation*}
\theta( x) \geq \epsilon \mathbbm{1}_{[a,b]}(x) \quad \text{and} \quad \tilde\theta( x) \geq \tilde\epsilon\mathbbm{1}_{[c,d]}(x)
\end{equation*}
Then, for any $\delta$ satisfying $0<\delta < (b-a)\wedge (d-c)$, we have \begin{equation}
\left(\theta \star \tilde\theta \right) (x) \geq \min(1,\delta)\;\epsilon\tilde\epsilon\; \mathbbm{1}_{[a+c+\delta,b+d-\delta]}(x) \;.
\end{equation}
As a consequence, for any integer $n$ in $\N^*$, we have
\begin{equation}
\theta^{\star n} (x) \geq \left(\min\left(1,\frac{b-a}{2n}\right)\right)^{n-1}\epsilon^n\mathbbm{1}_{[na+(b-a)/2,nb-(b-a)/2]}(x) \;.
\end{equation}
\end{lemma}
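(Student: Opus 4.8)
The plan is to prove the single-convolution bound by an explicit computation of the length of an interval intersection, and then to obtain the $n$-fold bound by a short induction built on that first estimate.

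First I would bound the convolution from below by the product of the two indicator minorants. Since $\theta(y)\geq\epsilon\mathbbm{1}_{[a,b]}(y)$ and $\tilde\theta(x-y)\geq\tilde\epsilon\mathbbm{1}_{[c,d]}(x-y)$, and both functions are nonnegative, Tonelli gives
\[
(\theta\star\tilde\theta)(x)\geq\epsilon\tilde\epsilon\int_\R\mathbbm{1}_{[a,b]}(y)\,\mathbbm{1}_{[x-d,x-c]}(y)\,\rmd y=\epsilon\tilde\epsilon\,\bigl|[a,b]\cap[x-d,x-c]\bigr|\,,
\]
i.e.\ a multiple of the length of the intersection of two intervals. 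As $x$ ranges over $\R$ this length is the trapezoidal function supported on $[a+c,b+d]$: it grows with unit slope from $0$ at $x=a+c$, is constant equal to $(b-a)\wedge(d-c)$ on a central plateau, and decreases with unit slope to $0$ at $x=b+d$. The one point to check carefully is the value at the left endpoint $x=a+c+\delta$: using $\delta<(b-a)\wedge(d-c)$, the intersection is exactly $[a,a+\delta]$, of length $\delta$, and symmetrically at $x=b+d-\delta$. Since the trapezoid attains its minimum on $[a+c+\delta,b+d-\delta]$ at these two endpoints, its value there is at least $\delta$; as $\min(1,\delta)\leq\delta$, this yields the announced minorization.

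For the $n$-fold convolution I would argue by induction on $k$, for a fixed target $n$, choosing the shrink parameter $\delta=(b-a)/(2n)$ at \emph{every} step. Taking as inductive hypothesis $\theta^{\star k}\geq\epsilon^k[\min(1,\delta)]^{k-1}\mathbbm{1}_{[ka+s_k,\,kb-s_k]}$ with $s_k=(k-1)(b-a)/(2n)$, I would apply the single-convolution bound to $\theta^{\star k}\star\theta$. One checks that
\[
0<\delta<\bigl(k(b-a)-2s_k\bigr)\wedge(b-a)=\frac{(b-a)(kn-k+1)}{n}\wedge(b-a)
\]
holds for every $1\leq k<n$, since $kn-k+1=k(n-1)+1\geq1$ makes the first factor at least $(b-a)/n>\delta$. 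The step is therefore legitimate: it multiplies the constant by $\min(1,\delta)\epsilon$, advances the endpoints by $a+\delta$ and $b-\delta$, and updates $s_{k+1}=s_k+\delta=k(b-a)/(2n)$, exactly matching the hypothesis at rank $k+1$. The base case $k=1$ is the assumption $\theta\geq\epsilon\mathbbm{1}_{[a,b]}$ itself (with $s_1=0$). At $k=n$ one has $s_n=(n-1)(b-a)/(2n)\leq(b-a)/2$, so the obtained interval $[na+s_n,\,nb-s_n]$ contains $[na+(b-a)/2,\,nb-(b-a)/2]$, and restricting the minorant to the latter interval gives the stated bound with constant $\epsilon^n[\min(1,(b-a)/(2n))]^{n-1}$.

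The computations are elementary throughout; the only place demanding care is the bookkeeping in the induction. Specifically, the main obstacle is verifying the length constraint $0<\delta<(\text{length})\wedge(b-a)$ at each rank and confirming that the choice $\delta=(b-a)/(2n)$ (rather than the more natural $(b-a)/(2(n-1))$) keeps $s_n\leq(b-a)/2$, so that the final interval is large enough; it is precisely this choice that produces the exponent $n-1$ and the factor $\min(1,(b-a)/(2n))$ in the announced constant.
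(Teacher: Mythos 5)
Your proof is correct. A point worth noting: the paper itself states this lemma without any proof (it is introduced only as an ``elementary lemma'' and then invoked in the proof of Proposition 5.3), so there is no authors' argument to compare yours against — your write-up supplies precisely the details the paper omits. The two places where care is needed are both handled properly: first, the intersection length $\bigl|[a,b]\cap[x-d,x-c]\bigr|$ is a concave, trapezoidal function of $x$ on $[a+c,b+d]$, so its minimum over the subinterval $[a+c+\delta,b+d-\delta]$ is attained at the endpoints, where the condition $\delta<(b-a)\wedge(d-c)$ gives the value exactly $\delta\geq\min(1,\delta)$; second, in the induction with the fixed shrink parameter $\delta=(b-a)/(2n)$, the admissibility condition at rank $k$ follows from
\begin{equation*}
k(b-a)-2s_k=\frac{(b-a)\bigl(k(n-1)+1\bigr)}{n}\geq\frac{b-a}{n}>\delta \;,
\end{equation*}
and the final restriction from $[na+s_n,nb-s_n]$ to $[na+(b-a)/2,nb-(b-a)/2]$ is legitimate because $s_n=(n-1)(b-a)/(2n)\leq(b-a)/2$. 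This matches the stated constant $\epsilon^n\bigl(\min(1,(b-a)/(2n))\bigr)^{n-1}$, including the degenerate case $n=1$ where the target interval reduces to the midpoint of $[a,b]$.
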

A lower bound of the decay of the absolute value of the shot-noise characteristic function is given by the following lemma.
\begin{lemma} \label{lemma:LemmaMainResult}
	Assume that the process \textbf{X} is given by \eqref{eq:def-Xt}
	under~\ref{ass:poisson}-\ref{cond:IRexpo} with some positive constant $\alpha$
	and $\lambda$.  Let $K, L, m$ and $s$ be positive constants. Then for all
	$\theta\in\Theta(K,L,s,m)$ and $u\in\R$, we have 
	\begin{equation}
		\left| \varphi_{X_0}(u)\right| \geq C_{K,L,m,\lambda/\alpha}\left(1+|u|\right)^{-\lambda/\alpha}
	\end{equation}
	where $C_{K,L,m,\lambda/\alpha} \defi  \exp(-\lambda (L+K^{1/(4+m)})/\alpha)$.
\end{lemma}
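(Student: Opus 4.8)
The plan is to lower-bound $|\varphi_{X_0}(u)|$ by controlling the real part of the exponent in the L\'evy--Khintchine representation~\eqref{eq:functionalMellin}. Writing $\varphi_{X_0}(u)=\exp\left(\int_\R \lambda K_h(uy)\,F(\rmd y)\right)$, we have $|\varphi_{X_0}(u)|=\exp\left(\lambda\int_\R \mathrm{Re}\,K_h(uy)\,F(\rmd y)\right)$, so it suffices to bound $\mathrm{Re}\,K_h(uy)$ from below. Using the exponential form~\eqref{eq:K_alpha}, namely $K_h(x)=\int_0^\infty(\rme^{\rmi x\rme^{-\alpha v}}-1)\,\rmd v=\alpha^{-1}\int_0^x (\rme^{\rmi s}-1)s^{-1}\,\rmd s$, I would compute $\mathrm{Re}\,K_h(x)=\alpha^{-1}\int_0^x (\cos s-1)s^{-1}\,\rmd s$, which is always nonpositive since $\cos s -1\le 0$. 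The whole game is therefore to produce a \emph{lower} bound on this nonpositive quantity that grows no faster than logarithmically in $|x|$.

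\medskip

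\textbf{The key estimate.} I would split $\mathrm{Re}\,K_h(x)=\alpha^{-1}\int_0^{|x|}(\cos s-1)s^{-1}\,\rmd s$ (using evenness) into the region $s\le 1$ and $s\ge 1$. On $[0,1]$ the integrand is bounded, since $(\cos s-1)/s$ stays in a compact interval, giving an $O(1)$ contribution. On $[1,|x|]$ I would bound $|\cos s-1|\le 2$ and estimate $\int_1^{|x|} 2 s^{-1}\,\rmd s = 2\log|x|$, yielding
\begin{equation*}
\mathrm{Re}\,K_h(x)\ge -\frac{1}{\alpha}\bigl(c_0+2\log(1+|x|)\bigr)
\end{equation*}
for a universal constant $c_0$; the precise constants must be arranged so that, after multiplying by $\lambda$ and integrating against $F$, one recovers the claimed exponent $\lambda/\alpha$ on $(1+|u|)$. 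A cleaner route is to bound $\mathrm{Re}\,K_h(x)\ge -\alpha^{-1}\log(1+|x|)-\alpha^{-1}C$ directly and then handle the additive constant $C$ through the moment/smoothness budget. Substituting $x=uy$ and integrating gives
\begin{equation*}
\lambda\int_\R \mathrm{Re}\,K_h(uy)\,F(\rmd y)\ge -\frac{\lambda}{\alpha}\int_\R \log(1+|uy|)\,F(\rmd y)-\frac{\lambda C}{\alpha}\;.
\end{equation*}

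\medskip

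\textbf{Absorbing the mark moments.} To extract $(1+|u|)^{-\lambda/\alpha}$ I would use $\log(1+|uy|)\le \log(1+|u|)+\log(1+|y|)$, so the integral splits as $\log(1+|u|)+\int_\R\log(1+|y|)\,F(\rmd y)$. The second integral is bounded uniformly over $\theta\in\Theta(K,L,s,m)$ because $\log(1+|y|)\le |y|$ and $\espt{|Y_0|}\le K^{1/(4+m)}$ by H\"older/Jensen. This produces
\begin{equation*}
|\varphi_{X_0}(u)|\ge \exp\left(-\frac{\lambda}{\alpha}\log(1+|u|)\right)\exp\left(-\frac{\lambda}{\alpha}\bigl(K^{1/(4+m)}+C\bigr)\right)=(1+|u|)^{-\lambda/\alpha}\,\exp\left(-\frac{\lambda}{\alpha}(K^{1/(4+m)}+C)\right)\;,
\end{equation*}
and I would identify the residual constant $C$ with the Sobolev bound $L$ so that the prefactor matches $C_{K,L,m,\lambda/\alpha}=\exp(-\lambda(L+K^{1/(4+m)})/\alpha)$ exactly. \textbf{The main obstacle} is precisely this bookkeeping: getting the logarithmic tail constant to collapse into $L$ rather than into a free universal constant. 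The natural way $L$ enters is through the boundedness part of the $[0,1]$-region estimate, where one may need $\sup_u|\varphi_{Y_0}(u)|$ or a refined bound on $\int_\R\mathrm{Re}\,K_h(uy)\,F(\rmd y)$ near small $u$ controlled by the Sobolev regularity; reconciling that with the clean form $L+K^{1/(4+m)}$ is the delicate accounting step. Everything else is the elementary logarithmic integral estimate above.
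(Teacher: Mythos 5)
Your approach has a genuine gap, and it sits exactly where you flagged it yourself. First, the estimate you actually carry out — $|\cos s-1|\le 2$ on $[1,|x|]$ — yields $\mathrm{Re}\,K_h(x)\ge -\alpha^{-1}\bigl(c_0+2\log(1+|x|)\bigr)$, which after exponentiation gives the rate $(1+|u|)^{-2\lambda/\alpha}$, not the claimed $(1+|u|)^{-\lambda/\alpha}$. The "cleaner route" $\mathrm{Re}\,K_h(x)\ge -\alpha^{-1}\log(1+|x|)-\alpha^{-1}C$ that would rescue the exponent is true, but it is not a consequence of the crude bound: it requires the uniform boundedness of $\int_1^{x}s^{-1}\cos s\,\rmd s$ (integration by parts/Dirichlet test, i.e.\ the cosine-integral asymptotics), which you assert rather than prove. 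Second, and more fundamentally, even with that repair the additive constant $C$ is a \emph{universal} number coming from the oscillation of $\cos$, and there is no mechanism in your argument to replace it by $L$: the Sobolev bound never enters your estimates. This is not mere bookkeeping. The class $\Theta(K,L,s,m)$ contains densities whose Sobolev semi-norm $L$ is smaller than any fixed universal constant (e.g.\ widely spread densities, at the price of a large moment bound $K$), so the bound you obtain, with prefactor $\exp\bigl(-\lambda(K^{1/(4+m)}+C)/\alpha\bigr)$, is then strictly weaker than the stated one and cannot imply the lemma with the constant $C_{K,L,m,\lambda/\alpha}=\exp(-\lambda(L+K^{1/(4+m)})/\alpha)$, which is the constant used downstream in Theorem~\ref{theorem22}.

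The missing idea in the paper's proof is to integrate over the marks \emph{before} estimating the oscillatory kernel. By the change of variable $v=zx$ and Fubini, \eqref{eq:functionalMellin} and \eqref{eq:K_alpha} give
\begin{equation*}
\left|\varphi_{X_0}(u)\right|=\exp\left(-\frac{\lambda}{\alpha}\int_0^{|u|}\frac{1-\mathrm{Re}\left(\varphi_{Y_0}(z)\right)}{z}\,\rmd z\right)\;,
\end{equation*}
so that the quantity to control is $\int_0^{|u|}z^{-1}\bigl(1-\mathrm{Re}\,\varphi_{Y_0}(z)\bigr)\rmd z$. Splitting at $z=1$: on $[0,1]$ the moment bound gives $1-\mathrm{Re}\,\varphi_{Y_0}(z)\le K^{1/(4+m)}z$, contributing $K^{1/(4+m)}$; on $[1,|u|]$ one writes $1-\mathrm{Re}\,\varphi_{Y_0}(z)\le 1+|\mathrm{Re}\,\varphi_{Y_0}(z)|$, where the term $1$ contributes exactly $\log|u|$ (coefficient one, with no oscillation argument needed), and the term $|\mathrm{Re}\,\varphi_{Y_0}(z)|/z$ is bounded by $L$ via Cauchy--Schwarz, $\bigl(\int_1^\infty z^{-2}\rmd z\bigr)^{1/2}\bigl(\int_1^\infty|\varphi_{Y_0}|^2\bigr)^{1/2}\le L$. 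In other words, the cancellation you tried to extract from the oscillation of the kernel is instead extracted from the decay of $\varphi_{Y_0}$, which is precisely what membership in $\Theta(K,L,s,m)$ quantifies through $L$ — this is how $L$ enters the constant, and it cannot be recovered by pointwise bounds on $\mathrm{Re}\,K_h$ alone.
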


\begin{proof}
	From \eqref{eq:functionalMellin} and \eqref{eq:K_alpha}, we have for
        all $u\in \R$,
\begin{equation*}
		\varphi_{X_0}(u)=\exp\left(\frac{\lambda}{\alpha} \int_{\R}\left(\int_0^{ux} \frac{\rme^{\rmi v}-1}{v}\rmd v\right)\theta(x)\rmd x\right)\;.
	\end{equation*}
	If follows that 
\begin{align}
          \nonumber\left|\varphi_{X_0}(u)\right|&=\exp\left(\frac{\lambda}{\alpha}\int_\R \left(\int_0^{ux}\frac{\cos(v)-1}{v} \rmd v\right) \theta(x) \rmd x\right)\\
          &\label{eq:PhiXmod}
=\exp\left(-\frac{\lambda}{\alpha}\int_0^{|u|}
            \frac{1-\text{Re}\left(\varphi_{Y_0}(z)\right)}{z}\rmd z\right)\;.
	\end{align}
First, we have for any real $z$ and any function $\theta \in \Theta(K,L,s,m)$,
\begin{align}
\nonumber \left|1-\text{Re}\left(\varphi_{Y_0}(z)\right)\right|&=\left|\int_{\R}\left(1-\rme^{\rmi xz}\right)\theta(x)\rmd x\right|\\
&\nonumber\leq \int_{\R}\left|1-\rme^{\rmi xz}\right|\theta(x)\rmd x \\
\nonumber &\leq 2\int_\R \left|\sin(xz/2)\right|\theta(x)\rmd x\\
&\nonumber\leq \int_{\R}\left|xz\right|\theta(x)\rmd x \leq K^{1/(4+m)}|z| \;.
\end{align}
We thus get that 
\begin{align}
\label{eq:upluspetitque1}
\int_0^1 \frac{1-\text{Re}\left(\varphi_{Y_0}(z)\right)}{z}\rmd z \leq
K^{1/(4+m)} \;.
\end{align}
Now, for $|u|\geq1$, we have that
\begin{align}
          \nonumber
\int_1^{|u|} \frac{1-\text{Re}\left(\varphi_{Y_0}(z)\right)}{z}\rmd z 
&\leq \log |u| +\int_1^{|u|}
z^{-1}\left|\text{Re}\left(\varphi_{Y_0}(z)\right)\right|\rmd z\\
\label{eq:u-plusgrand1}
& \leq \log |u| + L \;,
\end{align}
where we use the Cauchy-Schwartz inequality and 
$$
\left(\int_1^\infty\left|\text{Re}(\varphi_{Y_0})\right|^2\right)^{1/2}\leq
L\;.
$$
Inserting~(\ref{eq:upluspetitque1}) and~(\ref{eq:u-plusgrand1})
in~(\ref{eq:PhiXmod}), we get the result. 
\end{proof}

\begin{lemma}
\label{Lemma:Karamata}
There exists a constant $B>0$ such that, for all $u>0$, we have
\begin{equation*}
u\int_{u}^{\infty} \sqrt{\log\left(1+v\right)}\frac{\rmd v}{v^2} \leq B\sqrt{\log\left(1+u\right)}
\end{equation*}
\end{lemma}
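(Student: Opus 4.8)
The plan is to reduce the statement to a scale-invariant form by substitution and then split the range of $u$ into a small-$u$ and a large-$u$ regime, pairing each regime with the appropriate lower bound for $\sqrt{\log(1+u)}$. First I would set $v=uw$, which gives
$$
u\int_u^\infty \sqrt{\log(1+v)}\,\frac{\rmd v}{v^2}=\int_1^\infty \frac{\sqrt{\log(1+uw)}}{w^2}\,\rmd w\;,
$$
so that it suffices to bound this last integral by $B\sqrt{\log(1+u)}$.

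For the large-$u$ regime ($u\ge1$) I would use the elementary inequality $1+uw\le(1+u)w$, valid for all $w\ge1$ since $(1+u)w-(1+uw)=w-1\ge0$. Taking logarithms gives $\log(1+uw)\le\log(1+u)+\log w$, and subadditivity of $t\mapsto\sqrt t$ yields $\sqrt{\log(1+uw)}\le\sqrt{\log(1+u)}+\sqrt{\log w}$. Integrating term by term,
$$
\int_1^\infty \frac{\sqrt{\log(1+uw)}}{w^2}\,\rmd w\le\sqrt{\log(1+u)}\int_1^\infty\frac{\rmd w}{w^2}+\int_1^\infty\frac{\sqrt{\log w}}{w^2}\,\rmd w=\sqrt{\log(1+u)}+C_0\;,
$$
where $C_0\defi\int_1^\infty w^{-2}\sqrt{\log w}\,\rmd w<\infty$ (the integrand is $O(\sqrt{w-1})$ near $w=1$ and decays faster than any power at infinity). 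Since $\sqrt{\log(1+u)}\ge\sqrt{\log 2}$ for $u\ge1$, the additive constant $C_0$ is absorbed, giving the bound with constant $1+C_0/\sqrt{\log 2}$ on $[1,\infty)$.

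For the small-$u$ regime ($0<u\le1$) the splitting above is wasteful because $\sqrt{\log(1+u)}\to0$, so instead I would use $\log(1+t)\le t$ to obtain $\sqrt{\log(1+uw)}\le\sqrt{uw}$, whence
$$
\int_1^\infty \frac{\sqrt{\log(1+uw)}}{w^2}\,\rmd w\le\sqrt u\int_1^\infty w^{-3/2}\,\rmd w=2\sqrt u\;.
$$
Combined with the lower bound $\log(1+u)\ge u/2$ on $[0,1]$ (the difference vanishes at $0$ and has nonnegative derivative there), this gives $2\sqrt u\le2\sqrt2\,\sqrt{\log(1+u)}$. Taking $B=\max\!\left(2\sqrt2,\,1+C_0/\sqrt{\log 2}\right)$ then covers all $u>0$ and concludes the proof.

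The main difficulty is not any individual estimate but the fact that the inequality is asymptotically tight at \emph{both} ends $u\to0$ and $u\to\infty$, so no single crude bound can succeed; the delicate point is to pair each regime with the correct lower bound for $\sqrt{\log(1+u)}$ so that the leftover constant is absorbed rather than dominating.
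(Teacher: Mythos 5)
Your proof is correct, and it takes a genuinely different route from the paper's in the large-$u$ regime. Both arguments open with the same substitution $v=uw$ and both handle small $u$ identically, via $\log(1+t)\le t$ giving the bound $2\sqrt u$, matched against $\sqrt{\log(1+u)}\asymp\sqrt u$ near $0$. The divergence is at infinity: the paper invokes Karamata's theorem for regularly varying functions to obtain the asymptotic equivalence $u\int_u^\infty\sqrt{\log(1+v)}\,v^{-2}\,\rmd v\sim\sqrt{\log(1+u)}$ as $u\to\infty$, and then (implicitly, via continuity of the ratio on compact subsets of $(0,\infty)$) concludes boundedness; you instead use the elementary pointwise inequality $1+uw\le(1+u)w$ for $w\ge1$, subadditivity of $t\mapsto\sqrt t$, and absorption of the additive constant $C_0$ using $\sqrt{\log(1+u)}\ge\sqrt{\log 2}$. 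What your approach buys is a fully self-contained argument with an explicit constant $B=\max\bigl(2\sqrt2,\,1+C_0/\sqrt{\log 2}\bigr)$ valid for every $u>0$, with no appeal to regular variation theory and no unstated continuity step to cover the intermediate range; what the paper's approach buys is brevity (given Karamata's theorem as a citation) and the sharp asymptotic constant $1$ at infinity, which your bound does not recover. One small wording point: your justification of $\log(1+u)\ge u/2$ on $[0,1]$ should say the derivative of the difference, namely $(1-u)/\bigl(2(1+u)\bigr)$, is nonnegative on all of $[0,1]$, not merely at $0$; the fact itself is true and the step stands.
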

\begin{proof}

For all $u>0$, we have
$$
u\int_{u}^{\infty} \sqrt{\log\left(1+v\right)}\frac{\rmd v}{v^2} =\int_{1}^{\infty}\sqrt{\log\left(1+uy\right)}\frac{\rmd y}{y^2} \leq \sqrt{u}\int_1^\infty \frac{\rmd y}{y^{3/2}}=2\sqrt{u}\;.
$$
As $u \to 0$, $\sqrt{\log\left(1+u\right)}$ is equivalent to $\sqrt{u}$.\\
As $u\to \infty$, the Karamata's Theorem (see \cite{resnick2007extreme}[Theorem 0.6]) applied to the function $u\to\sqrt{\log(1+u)}u^{-2}$, which is regularly varying with index $-2$, gives that
$$
u\int_{u}^{\infty} \sqrt{\log\left(1+v\right)}\frac{\rmd v}{v^2} \underset{u\to\infty}{\sim}\sqrt{\log\left(1+u\right)} \;,
$$
which concludes the proof.
\end{proof}
\newpage
\bibliographystyle{plain}
\bibliography{shotnoise}
\end{sloppypar}
\end{document}